%
%
%
%
%

\RequirePackage{fix-cm}
\documentclass[smallextended]{svjour3}       
\smartqed  
\usepackage{graphicx}
\usepackage{amsmath}
\usepackage{amssymb}
\usepackage{natbib}
%
%
%
%
%

\newtheorem{Theorem}{Theorem}
\newtheorem{Lemma}{Lemma}

\newtheorem{Proposition}{Proposition}
\newtheorem{Corollary}{Corollary}

\newcommand{\x}{X}
\newcommand{\s}{\mathbb{S}}
\newcommand{\p}{\mathbb{P}}

\newcommand{\cH}{\mathcal{H}}
\newcommand{\cP}{\mathcal{P}}
\newcommand{\cS}{\mathcal{S}}
\newcommand{\cV}{\mathcal{V}}
\newcommand{\cW}{\mathcal{W}}
\newcommand{\cE}{\mathcal{E}}

\begin{document}

\title{Species notions that combine phylogenetic trees and phenotypic partitions}


\author{Anica Hoppe \and Sonja T\"urpitz \and Mike Steel}


\institute{Anica Hoppe, Sonja T\"urpitz \at
          Institute of Mathematics and Computer Science, Ernst-Moritz-Arndt University,  Greifswald, Germany.  \\
           \and
           M. Steel \at
           Biomathematics Research Centre, University of Canterbury, Christchurch, NZ\\
            \email{mike.steel@canterbury.ac.nz}  
}

\date{Received: date / Accepted: date}

\maketitle

\begin{abstract}
	A recent paper \citep{manceau2016species} developed a novel approach for describing  two well-defined notions of  `species'  based on a  phylogenetic tree and a phenotypic partition.
In this paper, we explore some further combinatorial properties of this approach and describe  an extension that allows an arbitrary number of phenotypic partitions to be combined with a phylogenetic tree for these two species notions.

\keywords{Phylogenetic tree, partition lattice, species.}
\end{abstract}

\newpage

	\section{Introduction}
	In biological classification, there have been numerous attempts to define notions of `species', from Aristotle through to the present \citep{deq, que}.  Today, phylogenetic trees and networks provide a framework for addressing this question \citep{bau, bau2, noa}; however, phylogeny alone captures only part of the concept of species. For example, two taxa may appear as different leaves in a tree (e.g. because of genetic differences in a neutral gene) but still be virtually identical (and perhaps even able to interbreed to produce viable offspring in the case where the taxa are diploid) and so would not be considered as belonging to different species.  Thus, phenotypical characters also play a role in the concept of species. 
	
	However, phenotypical characters alone are also not sufficient to delineate species, because events such as convergent evolution can result in  different species appearing quite similar. For example,  two species separated by a long period of evolution may share many morphological traits because they have evolved under similar selection pressures \citep{hol}.
	
If we regard species classification as the construction of a partition of a set $X$ of taxa (e.g. organisms, populations, etc.)  into disjoint sets (`species'), then a natural question arises: How can one construct such a partition in a canonical way so as to satisfy various desirable criteria by using a phylogenetic tree, and one or more phenotypic characters?  A novel approach to this question was explored recently by \cite{manceau2016species}, and provides the motivation and focus for this paper.

Suppose we are given a rooted binary phylogenetic tree $T$ on a sample $\x$ of individuals (or, equivalently, as a `hierarchy' on $\x$, defined shortly), together with a partition $\cP$ of $\x$, called  a phenotypic partition (such a partition is induced by a phenotypic character by grouping taxa in the same state together).  The desired output is a `species partition' $\cS$ of $\x$ that satisfies three desirable properties (defined shortly)  which should be satisfied  by a classification into species, namely, {\em heterotypy} between species, {\em homotypy} within species and {\em exclusivity} of each species. 
They showed that all three properties cannot generally be simultaneously satisfied; however, any pair of them can be, in which case there are two canonical constructions of a species partition $\cS$ of $\x$. Related combinatorial notions were also considered in a somewhat different setting in \cite{ald}; see also \cite{ale}, \cite{dre}, \cite{kor} and \cite{kwo} for combinatorial approaches to the species problem.

In this paper, we generalize some of the main results from \cite{manceau2016species}, by allowing an arbitrary number of phenotypic partitions, and by 
lifting the requirement that the input tree $T$ be binary. We also provide a short proof of a key result from that paper by using lattice theory, and, in the final section, establish conditions under which phenotypic partitions
can be realized by state changes (indicated by `marks') on edges of the tree $T$.

	\subsection{Preliminaries}
We first review some basic notions from phylogenetic combinatorics  (further details can be found in \cite{drebook} or \cite{steel2016phylogeny}).

	A \textit{rooted phylogenetic tree} $T$ on leaf set $\x$  is a rooted tree in which each non-leaf vertex (including the root) is unlabelled and has at least two outgoing edges.   Given any rooted phylogenetic tree $T$, there is a partial order $\preceq_T$ on the vertices of $T$, defined by $u\preceq_T v$ if $u=v$ or if $u$ lies on the path from the root to $v$. Given a rooted tree $T=(V,E)$ and a non-empty subset $C$ of $V$, there is a (unique) vertex $v$ for which $v\preceq_T c$ for all $c\in C$ and which is maximal (under $\preceq_T$) with respect to this property; this vertex $v$ is called the \textit{lowest common ancestor (lca)} of $C$ in $T$ and is denoted $\text{lca}_T(C)$.
	
	A \textit{hierarchy} $\cH$ is a collection of non-empty subsets of $\x$ that contains $\x$ and that satisfies the following {\em nesting} property:
 $$\mbox{For all }   A,B\in \cH,  A\cap B\in \{\emptyset,A,B\}.$$
If a hierarchy also has the property that the singleton $\{x\}$ is an element of $\cH$ for every $x\in\x$  then we say that $\cH$  is a \textit{hierarchy on $\x$}.
	
	An element of a hierarchy $\cH$ is also called a \textit{cluster} (it is sometimes also referred to in the literature as a `clade').   Given a hierarchy on $\x$, there is a unique rooted phylogenetic tree $T$ on $\x$ with the property that for each cluster there is a unique vertex in $T$ such that the cluster consists of the labels of exactly the leaves descending from this vertex. In this way, hierarchies on $\x$ and rooted phylogenetic trees on leaf set $\x$ are essentially equivalent and will be used interchangeably.
		
	A \textit{partition} of a set $\x$ is a division of $\x$ into a set of non-empty subsets, called {\em blocks},  such that every element of $\x$ is in exactly one block. Let $\p(\x)$ denote the set of all partitions of $\x$. A \textit{phenotypic partition} $\cP$ is a partition of $\x$, such that the blocks of the partition correspond to the different `phenotypes' of the individuals in $\x$ according to some biological characteristics. For example, birds can have the phenotype `able to fly' or `not able to fly', resulting in a phenotypic partition with two blocks. 
	
	A \textit{species partition} $\cS$ is a partition of $\x$ in which the blocks are called `species'.   We are particularly interested in species partitions that satisfy one or more of the following three desirable properties from \cite{manceau2016species}:	
	
	\begin{enumerate} 
		\item [(A)] \textit{Heterotypy between species.} Individuals in different species are phenotypically different.  Formally, for each phenotype $P\in\cP$ and for each species $S\in \cS$, either $P\subseteq S$ or $P\cap S = \emptyset$. 
		\item [(B)] \textit{Homotypy within species.} Individuals in the same species are phenotypically identical. Formally,  for each phenotype $P\in \cP$ and for each species $S\in \cS$, either $S\subseteq P$ or $P\cap S=\emptyset$.
		\item[(E)] \textit{Exclusivity.} All species are exclusive. Formally, each species is a cluster of $T$ (i.e. $\cS\subseteq\cH$).
	\end{enumerate}
	
	For any species partition  $\cS$ of $\x$ satisfying property (E) for a given hierarchy $\cH$  on $\x$, there is a corresponding hierarchy $\cH_{\cS}$ which is defined in \cite{manceau2016species} by
	\begin{equation}
	\label{hierar}
	\cH_{\cS}:=\{H\in\cH :   \mbox{ there exists some } S\in\cS \mbox{ with } S\subseteq H\}.
	\end{equation}

\subsection{The lattice of partitions of $\x$}
\label{partitions}

	Since we want to apply combinatorial arguments, we first recall some terms from lattice theory.
	Let $(Y,\preceq)$ be a \textit{partially  ordered set} (poset) consisting of a set $Y$ and a partial order $\preceq$. Given a set $A\subseteq Y$ with $A\neq \emptyset$, we say that $p\in Y$ is an \textit{upper bound} for $A$ if $a\preceq p$ for all $a\in A$. The \textit{least upper bound} for $A$ (${\rm lub}(A)$) is an upper bound such that for any other upper bound $p$ one has ${\rm lub}(A)\preceq p$. The \textit{lower bound} and \textit{greatest lower bound} (${\rm glb}(A)$) are defined analogously. A \textit{lattice} is a partially ordered set  $(Y,\preceq)$ with the property that for all non-empty subsets  $A$ of $Y$, ${\rm lub}(A)$ and ${\rm glb}(A)$ exist and both are unique (for further background and details, see \cite{bona2011walk}).

Next, we recall some important properties of the poset $\p(\x)$ (of partitions of $\x$), beginning with  notion of \textit{refinement} of partitions.
Let $\cS$ and $\cS'$ be two partitions of the set $X$. We say that $\cS$ is \textit{finer} than $\cS'$ and $\cS'$ is\textit{ coarser} than $\cS$, denoted $\cS\preceq\cS'$, if for each $S\in \cS$ and each $S'\in\cS'$, either $S\subseteq S'$ or $S\cap S'=\emptyset$. The  relation $\preceq$ is a partial order but  not a linear order (i.e.  two partitions $\cS$ and $\cS'$ of the set $\x$ cannot always be compared; for example, consider $\cS=\{\{a,b\},\{c\},\{d\}\}$ and $\cS'=\{\{a\},\{b\},\{c,d\}\}$).

It can be shown that the set of all partitions of the set $\x$ ordered by refinement (i.e.  $(\p(\x),\preceq)$)  is a lattice. In other words, the ${\rm lub}$ and the ${\rm glb}$ of any subset of $\p(\x)$ exist \citep{bona2011walk}. We describe the ${\rm glb}$ and ${\rm lub}$ explicitly now. Given $k$ partitions $\Sigma_1,\ldots,\Sigma_k\in\p(\x)$, their ${\rm glb}$ is the set of non-empty intersections of blocks each chosen from a different partition. That is, ${\rm glb}(\Sigma_1,\ldots,\Sigma_k)=\{\bigcap\limits_{i=1}^{k} B^{(i)}: B^{(i)}\in\Sigma_i\}-\{\emptyset\}$.
 To define the ${\rm lub}$,  first define the relation $\sim$ by $x\sim y$ if there exists $i\in\{1,\ldots,k\}$ such that $\{x,y\}\subseteq B\in\Sigma_i$. In other words, $x\sim y$ if $x$ and $y$ are in the same block in at least one of the $k$ partitions of $\x$. Let $\approx$ be the transitive closure of $\sim$.
The ${\rm lub}$ of $\Sigma_1,\ldots,\Sigma_k$ is then defined as the set of equivalence classes of $\x$ under $\approx$.
Note,  that $$\cS_0= \{\{x\}: x \in \x\} \mbox{ and } \cS_1= \{\x\}$$ are the minimal and maximal elements of $\p(\x)$, respectively, under the partial order $\preceq$.

	\section{Discussion of earlier results}\label{earlier}
In this section, we describe some of the main results of \cite{manceau2016species}, which we have organised into two theorems. The first theorem has a short proof, as presented in that
paper, the second theorem involved a more complicated argument, and we present a  shorter lattice-theoretic proof.
From now on, $\cH$ always denotes a hierarchy of $\x$, $\cP$ always denotes a phenotypic partition, and $\cS$ always denotes a species partition.

\begin{Theorem} [\cite{manceau2016species}]
\label{mainman1}
Given a phenotypic partition $\cP$ of $\x$ and a hierarchy $\cH$ on  $\x$:
\begin{itemize}
\item[(i)]
$\cS$ satisfies property  (A) if and only if $\cP\preceq\cS$, and $\cS$ satisfies property (B) if and only if $\cS\preceq\cP$. In particular, if  $\cS$ satisfies properties (AB), then $\cS=\cP$.
\item[(ii)]Unless $\cP$ and $\cH$ satisfy  $\cP\subseteq\cH$ (i.e.  each phenotype is a cluster), there is no species partition $\cS$ that satisfies all three properties (ABE).
\item[(iii)] Any \underline{two}   properties from  (A), (B) and (E) can be satisfied by at least one species partition $\cS$. Specifically, $\cS=\cP$ satisfies (AB), $\cS_1$ satisfies (AE), and $\cS_0$ satisfies (BE).
\end{itemize}
\end{Theorem}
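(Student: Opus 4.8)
The plan is to observe that all three parts reduce, essentially, to the definitions, once one recognizes that properties (A) and (B) are merely restatements of the refinement relation $\preceq$; these observations are then combined with the antisymmetry and reflexivity of the partition lattice $(\p(\x),\preceq)$.

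For part (i), I would first unwind the definitions. Property (A) asserts that for every $P\in\cP$ and every $S\in\cS$ one has $P\subseteq S$ or $P\cap S=\emptyset$, which is exactly the statement $\cP\preceq\cS$. Symmetrically, property (B) asserts $S\subseteq P$ or $P\cap S=\emptyset$ for all such $P,S$, which is exactly $\cS\preceq\cP$. Hence both equivalences hold by definition, with no further work. For the final claim of (i), if $\cS$ satisfies (AB) then $\cP\preceq\cS$ and $\cS\preceq\cP$ hold simultaneously, so antisymmetry of $\preceq$ on $\p(\x)$ yields $\cS=\cP$.

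Part (ii) then follows by contraposition from part (i): if a species partition $\cS$ were to satisfy all of (A), (B), and (E), then (AB) would force $\cS=\cP$ by part (i), while (E) gives $\cS\subseteq\cH$, and combining these yields $\cP=\cS\subseteq\cH$. Thus whenever $\cP\not\subseteq\cH$ no such $\cS$ can exist, which is exactly the stated claim.

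For part (iii) I would verify the three proposed partitions directly. That $\cS=\cP$ satisfies (AB) is immediate from part (i) together with reflexivity $\cP\preceq\cP$. For $\cS_1=\{\x\}$, property (A) holds because every $P\in\cP$ satisfies $P\subseteq\x$, and property (E) holds because $\x\in\cH$ by the definition of a hierarchy. For $\cS_0=\{\{x\}:x\in\x\}$, property (B) holds because each singleton $\{x\}$ is either contained in a given $P\in\cP$ (when $x\in P$) or disjoint from it, and property (E) holds because every singleton lies in $\cH$, by the assumption that $\cH$ is a hierarchy on $\x$. The only points requiring any care are the correct orientation of the quantifiers that distinguishes (A) from (B), and the appeals to antisymmetry and reflexivity of $\preceq$; beyond these the argument is purely definitional, so I do not anticipate a genuine obstacle.
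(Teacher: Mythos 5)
Your proof is correct: the definitional equivalences (A) $\Leftrightarrow \cP\preceq\cS$ and (B) $\Leftrightarrow \cS\preceq\cP$, antisymmetry of $\preceq$ for the conclusion $\cS=\cP$ under (AB), contraposition for part (ii), and the direct verification that $\cS_1$ and $\cS_0$ work (using $\x\in\cH$ and the singleton condition in the definition of a hierarchy \emph{on} $\x$) are all sound. This is essentially the same approach the paper intends: it does not reproduce a proof of this theorem, noting only that it ``has a short proof'' in the cited source, and it relies on exactly your identifications of (A) and (B) with the refinement order when it defines $\s_{AE}$ and $\s_{BE}$ in its lattice-theoretic proof of Theorem~\ref{mainman2}.
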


Theorem~\ref{mainman1} raises an interesting question.  One can satisfy (AE) by the very coarsest partition $\cS_1$, and (BE) by the very finest partition $\cS_0$. However, neither of  these is particularly relevant for biology.  Rather, one would like to find a finest partition satisfying (AE) and a coarsest partition satisfying (BE). Ideally, such partitions should also be uniquely determined by those properties.  Fortunately, this turns out to be the case.  A second main theorem  from \cite{manceau2016species} is the following.

\begin{Theorem} [\cite{manceau2016species}]
\label{mainman2}
Given $\cP$ and $\cH$:
\begin{itemize}
\item[(i)] There is a  unique finest partition  of $\x$ satisfying (AE) (heterotypy between species and exclusivity). 
\item[(ii)] There is a unique coarsest partition of $\x$ satisfying (BE) (homotypy within species and exclusivity).
\end{itemize}
\end{Theorem}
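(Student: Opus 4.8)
The plan is to work entirely inside the finite lattice $(\p(\x),\preceq)$ and to show that the two families of admissible partitions are closed under the appropriate lattice operation: the family satisfying (AE) under $\glb$, and the family satisfying (BE) under $\lub$. Once closure is known, finiteness of $\p(\x)$ guarantees that the $\glb$ of the first family lies in it and is therefore its unique minimum (finest element), and dually that the $\lub$ of the second family is its unique maximum (coarsest element). Throughout I would use Theorem~\ref{mainman1}(i), which recasts (A) as $\cP\preceq\cS$ and (B) as $\cS\preceq\cP$.

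For part (i), let $\mathcal{A}$ be the set of partitions satisfying (A) and (E); it is nonempty because $\cS_1=\{\x\}$ lies in it by Theorem~\ref{mainman1}(iii). The core step is to show $\mathcal{A}$ is closed under $\glb$. Given $\cS,\cS'\in\mathcal{A}$ and $\mathcal{T}=\glb(\cS,\cS')$, property (A) follows because $\cP\preceq\cS$ and $\cP\preceq\cS'$ make $\cP$ a lower bound of $\{\cS,\cS'\}$, so $\cP\preceq\mathcal{T}$; property (E) follows because every block of $\mathcal{T}$ is a nonempty intersection $S\cap S'$ with $S,S'\in\cH$, and the nesting property forces $S\cap S'\in\{\emptyset,S,S'\}\subseteq\cH\cup\{\emptyset\}$. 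Closure under binary $\glb$ together with finiteness of $\p(\x)$ then gives $\glb(\mathcal{A})\in\mathcal{A}$, and since this partition refines every member of $\mathcal{A}$ it is the required unique finest partition.

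Part (ii) is the order-dual. Let $\mathcal{B}$ be the set of partitions satisfying (B) and (E); it contains $\cS_0$ by Theorem~\ref{mainman1}(iii). For $\cS,\cS'\in\mathcal{B}$ and $\mathcal{U}=\lub(\cS,\cS')$, property (B) is immediate, since $\cS\preceq\cP$ and $\cS'\preceq\cP$ make $\cP$ an upper bound of $\{\cS,\cS'\}$, so $\mathcal{U}\preceq\cP$. Verifying property (E) for $\mathcal{U}$ is the main obstacle, and is where the work lies. Using the explicit description of $\lub$ via the transitive closure $\approx$, each block of $\mathcal{U}$ is the union of a collection of blocks of $\cS\cup\cS'$ that is connected under the relation ``share a common element''. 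I would prove the following lemma: in a hierarchy, the union of any such overlap-connected family of clusters is again a cluster. Because $\cH$ is a hierarchy, any two of its clusters are either nested or disjoint, so overlapping coincides with containment; taking a cluster $M$ of maximum cardinality in the connected family and walking along any path linking another cluster $C$ to $M$, maximality forces each cluster on the path (and hence $C$) to be contained in $M$, so the union equals $M\in\cH$. This yields (E) for $\mathcal{U}$.

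With closure under $\lub$ established, finiteness again gives $\lub(\mathcal{B})\in\mathcal{B}$, which is coarser than every member of $\mathcal{B}$ and is therefore the unique coarsest partition satisfying (BE). I expect the $\lub$ step for (E) to be the only genuinely nontrivial point: for $\glb$ the hierarchy closes under intersection essentially for free, whereas for $\lub$ one must show that iterated merging of overlapping clusters never escapes $\cH$, which is exactly the content of the maximal-cluster-along-a-path argument above.
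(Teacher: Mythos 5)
Your proposal is correct, and its overall strategy coincides with the paper's: translate (A) and (B) into refinement relations via Theorem~\ref{mainman1}(i), observe that the two families are nonempty (they contain $\cS_1$ and $\cS_0$ respectively), and show that they are closed under ${\rm glb}$ resp.\ ${\rm lub}$, with the nesting property of $\cH$ doing all the real work for property (E). Two implementation differences are worth noting. First, the paper takes the ${\rm glb}$/${\rm lub}$ of the entire family in one step and isolates closure as Lemma~\ref{lemma}, stated for an arbitrary subfamily of $\p(\x)$ all of whose members have blocks in $\cH$; you instead prove closure under the binary operations and iterate using finiteness of $\p(\x)$. Both are sound; the paper's formulation buys a reusable lemma that it invokes again later (before Theorem~\ref{long}), while yours avoids appealing to lattice operations on arbitrary subsets. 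Second, for the only delicate point---that ${\rm lub}$ preserves (E)---the paper argues that the relation $\sim$ is already transitive (two overlapping clusters are nested, so $x\sim y$ and $y\sim z$ force $x\sim z$), hence $\approx$ equals $\sim$ and each block of the ${\rm lub}$ is a union of clusters through a common point; your alternative is a direct lemma that an overlap-connected family of clusters has union equal to its maximum-cardinality member, proved by absorbing each cluster along a connecting path into the maximal one. These are the same idea in different clothes, but your version makes explicit the final step (why the union of the relevant clusters actually lands in $\cH$) that the paper leaves implicit once transitivity is established.
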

Following \cite{manceau2016species}, we will refer to the  unique species partitions referred to in parts  (i) and (ii) of Theorem~\ref{mainman2} as the  {\em loose} and {\em lacy} species partitions, respectively, and denote them  as $\cS_{\rm loose}$ ($=\cS_{\rm loose}(\cH, \cP))$ and 
$\cS_{\rm lacy}$ ($=\cS_{\rm lacy}(\cH, \cP))$, respectively.

 We will describe example of these two partitions in the next section, but first we provide a short lattice-theoretic proof of Theorem~\ref{mainman2}. This proof relies on the following lemma.

\begin{Lemma}\label{lemma}
	Given  a non-empty subset $\p$ of partitions of $\x$ (i.e. $\p\subseteq \p(X)$) and a hierarchy $\cH$ on $\x$, if $\Sigma \subseteq\cH$ for each $\Sigma \in \p$ then ${\rm glb}(\p)\subseteq\cH$ and ${\rm lub}(\p)\subseteq\cH$.
\end{Lemma}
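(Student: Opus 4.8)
The plan is to treat $\glb(\p)$ and $\lub(\p)$ separately, in each case combining the explicit descriptions of ${\rm glb}$ and ${\rm lub}$ recalled in Section~\ref{partitions} with the one structural fact supplied by the nesting property: any two clusters $A,B\in\cH$ with $A\cap B\neq\emptyset$ are nested, i.e.\ $A\subseteq B$ or $B\subseteq A$. Since $\x$ is the finite leaf set of a phylogenetic tree, $\cH$ is finite, so every chain of clusters has a smallest element and every non-empty family of clusters has maximal elements; I will use this to pass from ``pairwise overlapping'' to a single extreme cluster.

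The $\glb$ case should be quick. A typical block of $\glb(\p)$ is a non-empty set $B=\bigcap_{\Sigma\in\p}B_\Sigma$ with each $B_\Sigma\in\Sigma\subseteq\cH$. Fixing a point $x\in B$, every $B_\Sigma$ contains $x$, so these clusters pairwise intersect; by the nesting fact they form a chain under inclusion, and $B$ is just its smallest member, which is an element of $\cH$. Hence every block of $\glb(\p)$ is a cluster, so $\glb(\p)\subseteq\cH$.

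The $\lub$ case is the main obstacle, because a block of $\lub(\p)$ is a \emph{union} of clusters, and a union of clusters need not be a cluster. Let $C$ be an equivalence class of $\approx$. First I would note that any block meeting $C$ is contained in $C$ (points of one block are $\sim$-related, hence $\approx$-related), so $C=\bigcup\mathcal{F}$, where $\mathcal{F}$ is the family of all blocks contained in $C$, each a cluster. The crux is to show $\mathcal{F}$ has a \emph{unique} maximal cluster $M$: by the nesting fact each cluster of $\mathcal{F}$ sits inside a unique maximal one, and clusters sitting inside distinct maximal clusters are disjoint, so the connected components of the intersection graph on $\mathcal{F}$ correspond exactly to its maximal clusters. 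But the definition of $\approx$ makes this graph connected --- a chain $x=z_0\sim\cdots\sim z_m=y$ realising $x\approx y$ yields blocks of $\mathcal{F}$ whose consecutive members share some $z_j$, linking any two members of $\mathcal{F}$ by a path of overlapping clusters. Connectivity then forces a single maximal cluster $M$, and since every block of $\mathcal{F}$ lies in $M$ we conclude $C=M\in\cH$.

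The delicate points I would check carefully are the two appeals to finiteness (a smallest cluster in the $\glb$ chain, and maximal clusters in $\mathcal{F}$) and the claim that an overlap between two clusters of $\mathcal{F}$ forces nesting rather than a proper partial overlap --- this is exactly where the nesting property does the essential work, ruling out the otherwise-plausible situation in which two partitions glue together clusters from opposite sides of the tree into a non-cluster block.
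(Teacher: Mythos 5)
Your proof is correct, and it is worth separating the two halves. For ${\rm glb}(\p)$ your argument is essentially the paper's: blocks chosen one per partition all contain a common point of the (non-empty) intersection, so by the nesting property they form a chain under inclusion, and the intersection is the chain's smallest member, hence a cluster of $\cH$. For ${\rm lub}(\p)$, however, you take a genuinely different route. The paper argues locally that the relation $\sim$ is already transitive: if $x\sim y$ via a block $B_1$ and $y\sim z$ via a block $B_2$, then $B_1$ and $B_2$ overlap at $y$, so one contains the other and $x\sim z$ follows; hence $\approx$ coincides with $\sim$, and each equivalence class is a union of clusters sharing a common point, i.e.\ the maximum of a chain, hence a cluster. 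You instead work directly with the transitive closure $\approx$: you write an $\approx$-class $C$ as the union of the family $\mathcal{F}$ of blocks it contains, observe that the connected components of the intersection graph on $\mathcal{F}$ correspond exactly to the maximal clusters of $\mathcal{F}$ (overlap forces nesting, distinct maximal clusters are disjoint), and then use the chains $x=z_0\sim\cdots\sim z_m=y$ witnessing $\approx$ to prove this graph is connected, forcing a unique maximal cluster $M$ with $C=M\in\cH$. Both arguments are sound. The paper's is shorter and yields the stronger structural by-product that no transitive closure is actually needed ($\sim$ is itself an equivalence relation); yours never needs to identify $\approx$ with $\sim$, attacks head-on the real danger (that a block of the ${\rm lub}$ is a union of clusters rather than a cluster), and makes explicit the finiteness appeals (smallest element of a chain, existence and uniqueness of maximal clusters) that the paper's proof leaves implicit.
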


\begin{proof}
	Consider ${\rm glb}(\p)$. Because $\Sigma \subseteq\cH$ for each $\Sigma \in \p$, and $\cH$ satisfies  the nesting property of a hierarchy,  if we select a  block from each partition in $\p$, then the resulting sets $B_1, B_2, \ldots, B_{|\p|}$ satisfy $\bigcap\limits_{i=1}^{|\p|} B_i \in \{B_1, \ldots,B_{|\p|},\emptyset\}$. Hence, all elements of ${\rm glb}(\p)$ are also elements of $\cH$ and thus, ${\rm glb}(\p)\subseteq\cH$.  Next consider ${\rm lub}(\p)$. Because $\Sigma\subseteq\cH$ for all $\Sigma \in \p$, the relation $\sim$ is already transitive. To see this, observe that if two blocks $B_1$ and $B_2$ of two different  partitions from $\p$ share at least one element, then the nesting property of hierarchies implies that  one of those blocks ($B_1$ or $B_2$)  contains the other block. Consequently, if $x\sim y$ and $y\sim z$ with $x,y\in B_1$ and $y,z\in B_2$, then either $x\in B_2$ or $z\in B_1$, and thus, $x\sim z$. Thus, $\sim$ is transitive and so is equal to the equivalence relation $\approx$ whose equivalence classes are the blocks of ${\rm lub}(\p)$. Consequently, each block of the partition ${\rm lub}(\p)$ is  also an element of $\cH$ and thus, ${\rm lub}(\p)\subseteq\cH$. \end{proof}

We can now provide a short proof of Theorem~\ref{mainman2}.

\begin{proof}
Let \[\s_{AE}:=\{\Sigma\in\p(\x):\Sigma\subseteq\cH,\cP\preceq\Sigma\},\text{ and } \s_{BE}:=\{\Sigma\in\p(\x):\Sigma\subseteq\cH,\Sigma\preceq\cP\}.\]
Since $\s_{AE},\s_{BE}\subseteq\p(\x)$, and $(\p(\x),\preceq)$ is  a lattice, ${\rm glb}(\s_{AE})$ and ${\rm lub}(\s_{BE})$ exist.
Therefore, to establish Theorem~\ref{mainman2}, it suffices to show that:
\begin{equation}
\label{eqman}
{\rm glb}(\s_{AE})\in\s_{AE} \mbox{ and } {\rm lub}(\s_{BE})\in\s_{BE},
\end{equation}
from which it follows that  $\cS_{\rm loose}={\rm glb}(\s_{AE}) \mbox{ and } \cS_{\rm lacy}={\rm lub}(\s_{BE}).$
We establish  Eqn.~(\ref{eqman}) as follows. 
 The two  sets  ${\rm glb}(\s_{AE})$ and ${\rm lub}(\s_{BE})$ are not empty because $\cS_1\in \s_{AE}$ and $\cS_0\in \s_{BE}$.  Since all  elements of $\s_{AE}$ and of $\s_{BE}$ satisfy  property (E) (i.e.  $\cS\subseteq\cH$ for all $\cS\in \s_{AE}, \s_{BE}$),  Lemma \ref{lemma} implies that  ${\rm glb}(\s_{AE})$ and ${\rm lub}(\s_{BE})$ satisfy property (E). Also, because elements of $\s_{AE}$ satisfy  property (A) (i.e.  $\cP\preceq\cS$ for all $\cS\in \s_{AE}$), it follows that $\cP$ is a lower bound for $\s_{AE}$. Since ${\rm glb}(\s_{AE})$ is the greatest lower bound we know that it is coarser than all other lower bounds and, in particular, $\cP\preceq {\rm glb}(\s_{AE})$. Thus, ${\rm glb}(\s_{AE})$ satisfies property (A) as well. The proof that ${\rm lub}(\s_{BE})$ satisfies property (B) is analogous. In summary,  ${\rm glb}(\s_{AE})\in\s_{AE}$ and ${\rm lub}(\s_{BE})\in\s_{BE}$.
\end{proof}


\section{Constructing the loose and lacy species partitions}\label{corrected}
We now describe an explicit and readily computable way to construct the loose and lacy species partitions (our treatment differs in some details from the construction proposed by \cite{manceau2016species}).   The process involves first constructing subsets $\cH_1$ and $\cH_2$ of $\cH$. The loose and lacy species partitions are then the maximal elements (under set inclusion) of $\cH_1$ and $\cH_2$  respectively.
We first introduce some additional notation.  For $P \in \cP$, let
\begin{equation}
\label{hpeq}
h_P = \bigcap_{h \in \cH : P \subseteq h} h.
\end{equation}
Notice that $P \subseteq h_P \in \cH$, and that $h_P$ is the unique minimal cluster in $\cH$ that contains $P$ (it is the cluster in $\cH$ that corresponds to ${\rm lca}_T(P)$ where $T$ is the phylogenetic $X$--tree associated with $\cH$). 
It is  also possible that $h_P = h_Q$ for different blocks $P$ and $Q$ of  $ \cP$.

Next, given $\cP$ and $\cH$, define the following two sets:
$$\cH_1 := \{h_P : P \in \cP\}, \mbox{ and }
\cH_2 := \{h \in \cH : \exists P \in \cP: h \subseteq P\}.$$
Observe that $\cH_1$ and $\cH_2$ are both subsets of $\cH$. 
The following proposition provides an explicit description of the loose and lacy partitions; its proof is provided in the Appendix.

\begin{Proposition}
\label{construct}
Given a phenotypic partition $\cP$ of $X$ and a hierarchy  $\cH$ on $X$,
$$\cS_{\rm loose} \mbox{ is the set of maximal  elements of } \cH_1;$$
$$ \cS_{\rm lacy} \mbox{ is the set of maximal elements of } \cH_2.$$
\end{Proposition}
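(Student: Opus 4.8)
The plan is to use the characterisation of $\cS_{\rm loose}$ and $\cS_{\rm lacy}$ as, respectively, the unique finest element of $\s_{AE}$ and the unique coarsest element of $\s_{BE}$, guaranteed by Theorem~\ref{mainman2} and its proof. Write $\mathcal{M}_1$ for the set of maximal elements (under inclusion) of $\cH_1$, and $\mathcal{M}_2$ for the set of maximal elements of $\cH_2$. For each of these I would verify three things: that it is a partition of $X$, that it belongs to the relevant class ($\s_{AE}$ for $\mathcal{M}_1$, $\s_{BE}$ for $\mathcal{M}_2$), and that it is the extremal member of that class. By the uniqueness in Theorem~\ref{mainman2}, this forces $\mathcal{M}_1=\cS_{\rm loose}$ and $\mathcal{M}_2=\cS_{\rm lacy}$.

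That $\mathcal{M}_1$ and $\mathcal{M}_2$ are partitions follows from the nesting property of $\cH$. Pairwise disjointness is immediate: two distinct clusters in $\cH$ are nested or disjoint, and nesting is ruled out by maximality. For covering, every $x\in X$ lies in some block $P\in\cP$, and then $x\in P\subseteq h_P\in\cH_1$, while $\{x\}\in\cH$ (as $\cH$ is a hierarchy \emph{on} $X$) with $\{x\}$ contained in its $\cP$-block gives $\{x\}\in\cH_2$; since the members of $\cH_i$ containing a fixed set form a chain under nesting, each such set sits inside a unique maximal element. Property (E) is then automatic, as $\mathcal{M}_1\subseteq\cH_1\subseteq\cH$ and $\mathcal{M}_2\subseteq\cH_2\subseteq\cH$. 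Property (A) for $\mathcal{M}_1$ holds because each $P\in\cP$ satisfies $P\subseteq h_P$, which lies in a single block of $\mathcal{M}_1$, so $P$ meets exactly one block and is contained in it; property (B) for $\mathcal{M}_2$ holds because, by definition of $\cH_2$, every block of $\mathcal{M}_2$ lies in some block of $\cP$.

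The extremality step is the crux, and it is where the definitions of $\cH_1$ and $\cH_2$ are really used. For $\mathcal{M}_1$, take any $\Sigma\in\s_{AE}$ and consider a block $h_P$ of $\mathcal{M}_1$; since $\cP\preceq\Sigma$ we have $P\subseteq T$ for some $T\in\Sigma\subseteq\cH$, and because $T$ is a cluster containing $P$, minimality of $h_P$ yields $h_P\subseteq T$. Hence $\mathcal{M}_1\preceq\Sigma$, so $\mathcal{M}_1$ is a lower bound for $\s_{AE}$ lying in $\s_{AE}$, i.e.\ $\mathcal{M}_1={\rm glb}(\s_{AE})=\cS_{\rm loose}$. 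Dually, for $\mathcal{M}_2$ take any $\Sigma\in\s_{BE}$ and a block $T\in\Sigma$; since $\Sigma\preceq\cP$ and $T\in\cH$, we get $T\subseteq P$ for some $P\in\cP$, so $T\in\cH_2$ and $T$ lies in a block of $\mathcal{M}_2$. Hence $\Sigma\preceq\mathcal{M}_2$, so $\mathcal{M}_2$ is an upper bound for $\s_{BE}$ lying in $\s_{BE}$, i.e.\ $\mathcal{M}_2={\rm lub}(\s_{BE})=\cS_{\rm lacy}$.

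I expect the main obstacle to be isolating the two asymmetric arguments in the extremality step: for the loose partition one must invoke that $h_P$ is the \emph{minimal} cluster containing $P$ (so no (AE)-partition can be strictly finer than $\mathcal{M}_1$), whereas for the lacy partition one must observe that the (BE)-conditions already force every block of any competitor to lie in $\cH_2$ (so $\mathcal{M}_2$ cannot be coarsened any further). The remaining verifications---disjointness, covering, and properties (E), (A), (B)---are routine applications of the nesting property and the definitions of $\cH_1$ and $\cH_2$.
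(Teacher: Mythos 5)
Your proof is correct, and its skeleton---show the sets of maximal elements are partitions, show they satisfy (AE) and (BE), then settle extremality---matches the paper's Appendix, with your parts on disjointness, covering, and properties (A), (B), (E) essentially identical to the paper's parts (i) and (ii). The genuine difference is in the extremality step, which you correctly flag as the crux. The paper argues locally and by contradiction: if some $\cW\in\s_{AE}$ were strictly finer than the set $\mathcal{M}_1$ of maximal elements of $\cH_1$, then a block $h_P$ would split into $k\geq 2$ disjoint non-empty pieces of $\cW$; minimality of $h_P$ forces $P$ to meet exactly one piece, which must then equal $h_P$, leaving the other pieces empty---a contradiction (dually for $\cH_2$). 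That argument only establishes that $\mathcal{M}_1$ and $\mathcal{M}_2$ are minimal/maximal \emph{elements} of $\s_{AE}$ and $\s_{BE}$, so the identification with $\cS_{\rm loose}$ and $\cS_{\rm lacy}$ must lean on the uniqueness supplied by Theorem~\ref{mainman2}. You instead prove the global comparisons directly: $\mathcal{M}_1\preceq\Sigma$ for \emph{every} $\Sigma\in\s_{AE}$ (since $P\subseteq T$ for some $T\in\Sigma\subseteq\cH$, and $h_P$ is the minimal cluster containing $P$, hence $h_P\subseteq T$), and $\Sigma\preceq\mathcal{M}_2$ for every $\Sigma\in\s_{BE}$ (each block of $\Sigma$ lies in $\cH_2$, hence inside a maximal element). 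This buys you self-containedness: a member of $\s_{AE}$ that is also a lower bound for $\s_{AE}$ is automatically its greatest lower bound, so you re-derive, rather than consume, the existence-and-uniqueness content of Theorem~\ref{mainman2}. The paper's contradiction argument is marginally shorter but strictly needs that theorem to upgrade ``minimal'' to ``finest''; yours does not, and is arguably the cleaner route.
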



In the phylogenetic tree $T$ corresponding to the hierarchy $\cH$ the elements in $\cH_1$ are the lowest common ancestors in $T$ of each element in $\cP$. The loose species partition  contains, for each leaf of $T$, the first last common ancestor which is in $\cH_1$ that lies on the path from the root to the leaf. \\
The elements of $\cH_2$  correspond to the vertices of the phylogenetic tree $T$ associated with $\cH$  for which all descended leaves have the same phenotype. The lacy species partition  contains, for each leaf of $T$, the first element which is in  $\cH_2$ that lies on the path from the root to the leaf.

An example (based on Fig. 5 from \cite{manceau2016species}) is shown in Fig.~\ref{manfig}, for the set $\x=\{1,2,3,4,5,6,7,8,9\}$  together with the hierarchy $\cH$ on $X$ defined by
 $$\cH=\{\{x\}: x\in \x\} \cup \{\{2,3\}, \{1,2,3\}, \{4,5\},\{6,7\},\{8,9\},\{1,2,3,4,5\},\{6,7,8,9\},\x\}$$ along with the phenotypic partition
$\cP=\{\{1,2,3\},\{4\},\{5\},\{6,8,9\},\{7\}\}.$
In this example,
$$\cH_1= \{\{1,2,3\}, \{4\}, \{5\}, \{7\}, \{6,7,8,9\}\}, \mbox{ and so}$$
$$\cS_{\rm loose}= \{\{1,2,3\},\{4\},\{5\},\{6,7,8,9\}\}.$$
In addition, 
$$\cH_2=\{\{x\}: x\in \x\} \cup \{\{2,3\}, \{1,2,3\}, \{8,9\}\}, \mbox{ and so }$$
$$\cS_{\rm lacy}=\{\{4\}, \{5\}, \{6\}, \{7\}, \{1,2,3\}, \{8,9\}\}.$$
\begin{figure}[h]
\centering
\includegraphics[scale=0.8]{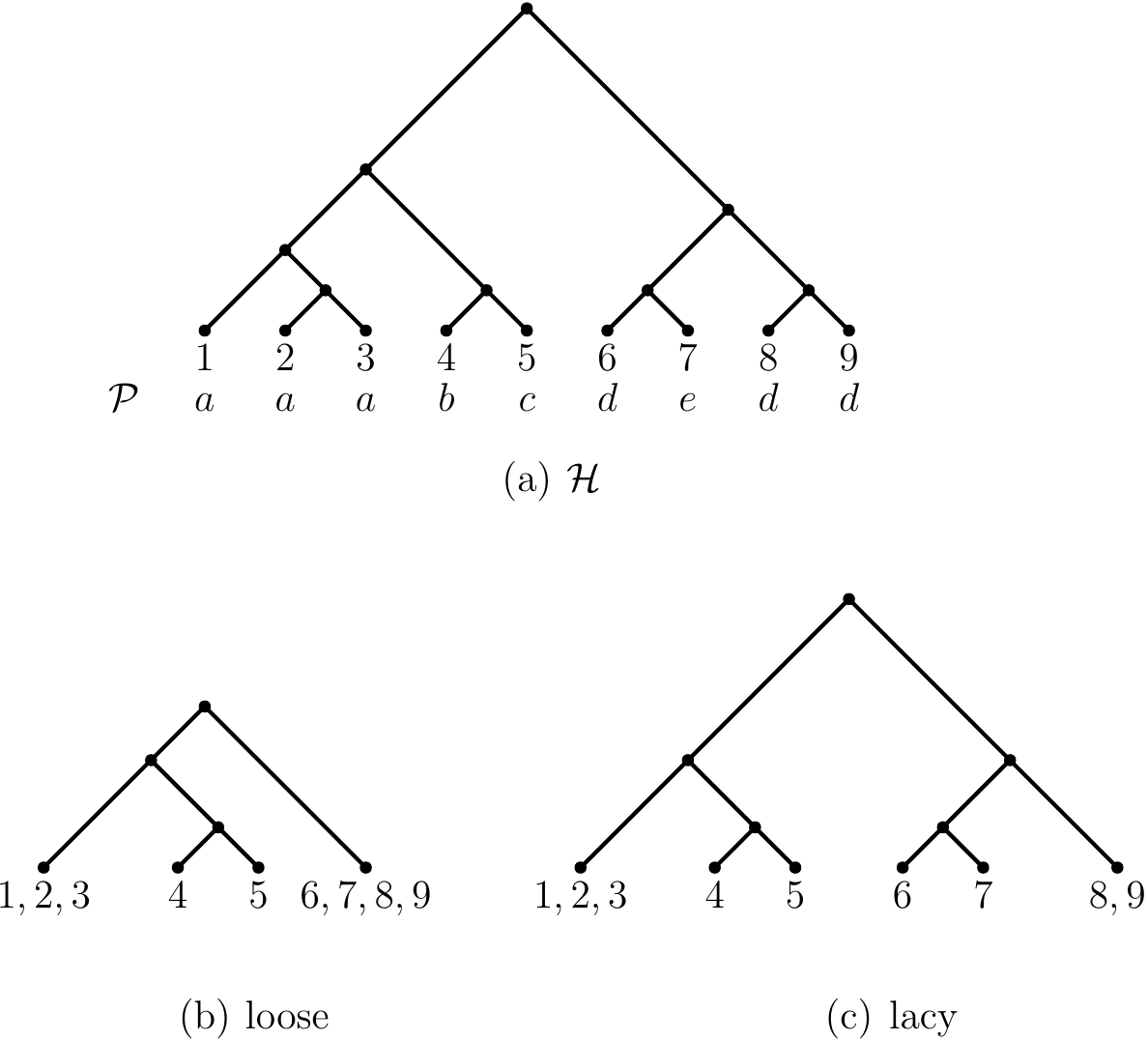}
	\caption{(a) A hierarchy $\cH$ and a phenotypic partition $\cP$ indicated by the states at the leaves. (b) The corresponding loose species partition $\cS_{\rm loose}$ and its associated hierarchy $\cH_{\rm loose}$ (c).  The corresponding lacy species partition $\cS_{\rm lacy}$  and its associated hierarchy $\cH_{\rm lacy}$. \label{exampleH}}
	\label{manfig}
\end{figure}

The corresponding induced hierarchies (given by Eqn.~(\ref{hierar})) are given by: 
$$\cH_{\rm loose}=\cS_{\rm loose} \cup \{\{1,2,3,4,5\}, \x\}
 \mbox{ and }  \cH_{\rm lacy}=\cS_{\rm lacy} \cup\{\{4,5\},\{6,7\}, \{1,2,3,4,5\},\{6,7,8,9\},\x\}.$$


	\section{Extending the theory to $k\geq 2$ phenotypic partitions}\label{k part}
Suppose now, that we have $k\geq 2$ phenotypic partitions $\cP_1,\dots,\cP_k$ with $\cP_i\in \p(\x)$ for $i=1,\dots,k$ and, as before, a hierarchy $\cH$ on $\x$.\\
Define the sets of species partitions satisfying properties (AE), (BE) as the \mbox{following}:  $$\s_{AE}^{(k)}:=\{\Sigma\in\p(\x):\Sigma\subseteq\cH,\cP_i\preceq\Sigma,\mbox{ for each } i=1,\dots,k\}, \mbox{ and }$$ $$\s_{BE}^{(k)}:=\{\Sigma\in\p(\x):\Sigma\subseteq\cH,\Sigma\preceq\cP_i, \mbox{ for each }  i=1,\dots,k\}.$$

Some natural questions arise now: Is ${\rm glb}(\s_{AE}^{(k)})\in\s_{AE}^{(k)}$? Is ${\rm lub}(\s_{BE}^{(k)})\in\s_{BE}^{(k)}$?
And if so, how are these two partitions related to $\cS_{\rm loose}$ and $\cS_{\rm lacy}$?
We will show that the case $k\geq 2$ can be reduced to the earlier case where $k=1$.

\begin{Theorem}\label{more}
Given a hierarchy $\cH$ on $\x$ and phenotypic partitions $\cP_1,\dots,\cP_k$ of $\x$, let 
$$\cP^+ := {\rm lub}(\cP_1,\dots,\cP_k) \mbox{ and } {\cP}^- := {\rm glb}(\cP_1,\dots,\cP_k).$$
	$$\mbox{ Then } {\rm glb}(\s_{AE}^{(k)})=\cS_{\rm loose}(\cH,\cP^+) \in\s_{AE}^{(k)},$$
	$$ \mbox{ and } {\rm lub}(\s_{BE}^{(k)})=\cS_{\rm lacy}(\cH,\cP^-) \in\s_{BE}^{(k)}.$$

\end{Theorem}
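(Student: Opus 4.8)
The plan is to reduce the $k$-partition case to the single-partition results already established, by showing that the two collections $\s_{AE}^{(k)}$ and $\s_{BE}^{(k)}$ coincide with $\s_{AE}$ and $\s_{BE}$ for the aggregated partitions $\cP^+$ and $\cP^-$ respectively. The key observation is the following translation between the refinement conditions. For any partition $\Sigma$, the condition ``$\cP_i \preceq \Sigma$ for every $i$'' should be equivalent to the single condition ``$\cP^+ \preceq \Sigma$'', and dually ``$\Sigma \preceq \cP_i$ for every $i$'' should be equivalent to ``$\Sigma \preceq \cP^-$''. Once these two equivalences are in hand, we immediately obtain the set equalities $\s_{AE}^{(k)} = \s_{AE}(\cH,\cP^+)$ and $\s_{BE}^{(k)} = \s_{BE}(\cH,\cP^-)$, and then the conclusion follows by applying Theorem~\ref{mainman2} (and its lattice-theoretic proof) to these single phenotypic partitions.

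First I would establish the upper-bound equivalence. Since $\cP^+ = {\rm lub}(\cP_1,\dots,\cP_k)$, it is by definition an upper bound for each $\cP_i$, so $\cP_i \preceq \cP^+$ for all $i$; hence if $\cP^+ \preceq \Sigma$ then transitivity of $\preceq$ gives $\cP_i \preceq \Sigma$ for every $i$. For the converse, if $\Sigma$ is an upper bound for all the $\cP_i$ simultaneously, then $\Sigma$ is an upper bound for the set $\{\cP_1,\dots,\cP_k\}$, and because $\cP^+$ is the \emph{least} upper bound, the definition of ${\rm lub}$ forces $\cP^+ \preceq \Sigma$. This gives exactly the equivalence I want, and therefore $\s_{AE}^{(k)} = \s_{AE}(\cH,\cP^+)$. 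The argument for the lower-bound side is entirely dual: $\cP^- = {\rm glb}(\cP_1,\dots,\cP_k)$ is a lower bound for each $\cP_i$, so $\Sigma \preceq \cP^-$ implies $\Sigma \preceq \cP_i$ for all $i$ by transitivity, while conversely any common lower bound $\Sigma$ must satisfy $\Sigma \preceq \cP^-$ because $\cP^-$ is the greatest such lower bound; this yields $\s_{BE}^{(k)} = \s_{BE}(\cH,\cP^-)$.

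With the two set equalities in place, the remainder is a direct appeal to the $k=1$ theory. Applying Theorem~\ref{mainman2} to the single phenotypic partition $\cP^+$ gives that ${\rm glb}(\s_{AE}(\cH,\cP^+)) = \cS_{\rm loose}(\cH,\cP^+)$ and that this partition lies in $\s_{AE}(\cH,\cP^+)$; combined with $\s_{AE}^{(k)} = \s_{AE}(\cH,\cP^+)$ this is precisely the first claimed identity ${\rm glb}(\s_{AE}^{(k)}) = \cS_{\rm loose}(\cH,\cP^+) \in \s_{AE}^{(k)}$. Symmetrically, applying the theorem to $\cP^-$ gives ${\rm lub}(\s_{BE}^{(k)}) = \cS_{\rm lacy}(\cH,\cP^-) \in \s_{BE}^{(k)}$.

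I do not expect a serious obstacle here, since the heart of the argument is just the universal property of ${\rm lub}$ and ${\rm glb}$ in the partition lattice. The one point requiring a little care is to make sure the equivalences are genuinely \emph{if and only if} statements rather than one-directional implications: the forward directions use only that $\cP^+$ (resp.\ $\cP^-$) is \emph{an} upper (resp.\ lower) bound, whereas the converse directions crucially use the \emph{least}/\emph{greatest} property. I would also note explicitly that $\cP^+$ and $\cP^-$ are themselves well-defined partitions of $\x$ because $(\p(\x),\preceq)$ is a lattice, so that $\cS_{\rm loose}(\cH,\cP^+)$ and $\cS_{\rm lacy}(\cH,\cP^-)$ are meaningful objects to which the $k=1$ theorem applies; the exclusivity constraint $\Sigma \subseteq \cH$ is common to both descriptions and plays no role in the reduction.
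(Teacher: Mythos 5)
Your proposal is correct and follows essentially the same route as the paper: both reduce the $k$-partition case to the single-partition case by establishing the set equalities $\s_{AE}^{(k)}=\{\Sigma\in\p(\x):\Sigma\subseteq\cH,\cP^+\preceq\Sigma\}$ and $\s_{BE}^{(k)}=\{\Sigma\in\p(\x):\Sigma\subseteq\cH,\Sigma\preceq\cP^-\}$, and then invoke Theorem~\ref{mainman2} (via Eqn.~(\ref{eqman})) for $\cP^+$ and $\cP^-$. If anything, your justification of the set equalities via the universal property of ${\rm lub}$/${\rm glb}$ (transitivity one way, leastness/greatestness the other) is stated more cleanly than the paper's phrasing about no partition lying strictly between $\cP_i$ and $\cP^+$.
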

\begin{proof}
First, we show that:
\begin{equation}\label{SAEeq}
\s_{AE}^{(k)}=\{\Sigma\in\p(\x):\Sigma\subseteq\cH,\cP^+\preceq\Sigma\}.
\end{equation}
By the definition of $\s_{AE}^{(k)}$, it follows that $\cP_i\preceq \Sigma$ for all $\Sigma\in\s_{AE}^{(k)}$ and for all $i=1,\ldots, k$. Furthermore, for all $i=1,\dots,k$ we have $\cP_i\preceq \cP^+$. By the definition of the ${\rm lub}$, there is no $\cP\in\p(\x)$ with  $\cP\subseteq\cH$ and with $\cP$ lying strictly between $\cP_i$ and $\cP^+$ under the refinement partial order $\preceq$ for all $i=1,\dots,k$. Therefore, we obtain Eqn.~(\ref{SAEeq}).
For $\s_{BE}^{(k)}$ an analogous argument applies, using ${\rm glb}$ instead of ${\rm lub}$. Specifically: $$\s_{BE}^{(k)}=\{\Sigma\in\p(\x):\Sigma\subseteq\cH,\Sigma\preceq \cP^-\}.$$

Now, $\cP^+$ and $\cP^-$ are  partitions of $\x$, and for each of these two phenotypic partitions Eqn.~(\ref{eqman}) ensures that ${\rm glb}(\s_{AE}^{(k)})= \cS_{\rm loose}(\cH,\cP^+) \in\s_{AE}^{(k)}$
 and ${\rm lub}(\s_{BE}^{(k)})=\cS_{\rm lacy}(\cH,\cP^-) \in\s_{BE}^{(k)}$, as claimed.\\

\end{proof} 

\subsection{Combining properties for $k\geq 2$ phenotypic partitions}\label{combining}
Until now, we were looking for species partitions that are subsets of the given hierarchy and either finer or coarser than the given phenotypic partitions, in other words,  satisfying either (AE) or (BE). Another question concerning $k=2$ phenotypic partitions $\cP_1$ and $\cP_2$ is the following:  Is there a subset $\Sigma$ of $\cH$ for which $\cP_1 \preceq \Sigma \preceq \cP_2$? In other words, is there a species partition that satisfies (AE) for one phenotypic partition and (BE) for a second one?  The answer to this question is provided in the following result.

\begin{Theorem}\label{inBetween}
Given phenotypic partitions $\cP_1, \cP_2$ of $\x$  and a hierarchy $\cH$ on $\x$,   there is a subset $\Sigma$ of $\cH$ with $\cP_1 \preceq \Sigma \preceq \cP_2$ if and only if $\cS_{\rm loose}(\cH,\cP_1) \preceq \cS_{\rm lacy}(\cH,\cP_2)$.
\end{Theorem}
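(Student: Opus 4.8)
The plan is to prove both directions of the biconditional, making essential use of the characterizations of $\cS_{\rm loose}$ and $\cS_{\rm lacy}$ as extremal elements of the sets $\s_{AE}$ and $\s_{BE}$ that were established in the proof of Theorem~\ref{mainman2}. Recall that $\cS_{\rm loose}(\cH,\cP_1) = {\rm glb}(\s_{AE})$ where $\s_{AE} = \{\Sigma \in \p(\x) : \Sigma \subseteq \cH, \cP_1 \preceq \Sigma\}$, so it is the \emph{finest} subset of $\cH$ satisfying $\cP_1 \preceq \Sigma$; dually, $\cS_{\rm lacy}(\cH,\cP_2) = {\rm lub}(\s_{BE})$ is the \emph{coarsest} subset of $\cH$ satisfying $\Sigma \preceq \cP_2$. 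The key idea is that any $\Sigma$ sandwiched as $\cP_1 \preceq \Sigma \preceq \cP_2$ with $\Sigma \subseteq \cH$ is simultaneously a member of $\s_{AE}$ (for $\cP_1$) and of $\s_{BE}$ (for $\cP_2$), and these two memberships force it to lie between the extremal elements.

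For the forward direction, suppose such a $\Sigma$ exists. Since $\Sigma \subseteq \cH$ and $\cP_1 \preceq \Sigma$, we have $\Sigma \in \s_{AE}$, so by the extremality of the loose partition $\cS_{\rm loose}(\cH,\cP_1) = {\rm glb}(\s_{AE}) \preceq \Sigma$. Likewise, since $\Sigma \subseteq \cH$ and $\Sigma \preceq \cP_2$, we have $\Sigma \in \s_{BE}$, so $\Sigma \preceq {\rm lub}(\s_{BE}) = \cS_{\rm lacy}(\cH,\cP_2)$. Chaining these through the transitivity of $\preceq$ gives $\cS_{\rm loose}(\cH,\cP_1) \preceq \Sigma \preceq \cS_{\rm lacy}(\cH,\cP_2)$, which yields the desired inequality.

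For the converse, suppose $\cS_{\rm loose}(\cH,\cP_1) \preceq \cS_{\rm lacy}(\cH,\cP_2)$. I would simply exhibit a witness: take $\Sigma = \cS_{\rm lacy}(\cH,\cP_2)$ (or, symmetrically, $\Sigma = \cS_{\rm loose}(\cH,\cP_1)$). We know $\Sigma = \cS_{\rm lacy}(\cH,\cP_2) \subseteq \cH$ and $\Sigma \preceq \cP_2$ by the defining properties of the lacy partition. It remains to check $\cP_1 \preceq \Sigma$; this follows from the hypothesis together with the fact that $\cP_1 \preceq \cS_{\rm loose}(\cH,\cP_1)$ (since the loose partition satisfies property (A) for $\cP_1$), so $\cP_1 \preceq \cS_{\rm loose}(\cH,\cP_1) \preceq \cS_{\rm lacy}(\cH,\cP_2) = \Sigma$ by transitivity. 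Hence $\Sigma$ is the required subset of $\cH$ with $\cP_1 \preceq \Sigma \preceq \cP_2$.

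The main subtlety to watch for is that $\preceq$ is only a partial order, so one must avoid any step that implicitly assumes comparability of partitions that has not been justified; every inequality I use above is supplied directly by the extremal (lub/glb) property or by the explicit membership in $\s_{AE}$ or $\s_{BE}$, so no spurious comparisons are introduced. I do not anticipate a genuine obstacle here — both directions reduce to transitivity chains built from the characterizations of the loose and lacy partitions — but the cleanest presentation is to state up front that $\cS_{\rm loose}(\cH,\cP_1)$ is the finest element of $\s_{AE}$ and $\cS_{\rm lacy}(\cH,\cP_2)$ is the coarsest element of $\s_{BE}$, and then let the sandwich argument run in both directions.
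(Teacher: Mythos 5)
Your proposal is correct and follows essentially the same route as the paper's own proof: the forward direction places $\Sigma$ in $\s_{AE}$ (for $\cP_1$) and $\s_{BE}$ (for $\cP_2$) and invokes the glb/lub extremality of $\cS_{\rm loose}$ and $\cS_{\rm lacy}$, while the converse exhibits $\cS_{\rm lacy}(\cH,\cP_2)$ (or, as the paper also notes, $\cS_{\rm loose}(\cH,\cP_1)$) as an explicit witness and chains the inequalities $\cP_1 \preceq \cS_{\rm loose}(\cH,\cP_1) \preceq \cS_{\rm lacy}(\cH,\cP_2) \preceq \cP_2$. No gaps; the argument matches the paper's in both structure and substance.
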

\begin{proof}
($\Rightarrow$) First suppose that  $\cP_1 \preceq \Sigma \preceq \cP_2$ holds for some $\Sigma \in \p(\x), \Sigma\subseteq\cH$. 
		This implies the following:
		\begin{itemize}
			\item $\cP_1 \preceq \Sigma \subseteq \cH$, so $\Sigma$ is an element of the set $\s_{AE}$ for $\cP_1$
			and $\cS_{\rm loose}(\cH, \cP_1) \preceq \Sigma$ (since $\cS_{\rm loose}$ is a lower bound to $\s_{AE}$ for $\cP_1$);
			\item $\cH \supseteq \Sigma \preceq \cP_2$, so $\Sigma$ is an element of the set $\s_{BE}$ for $\cP_2$
			and  $\Sigma \preceq \cS_{\rm lacy}(\cH,\cP_2)$ (since $\cS_{\rm lacy}$ is an upper bound to $\s_{BE}$ for $\cP_2$).
		\end{itemize}Consequently, $\cS_{\rm loose}(\cH,\cP_1) \preceq \cS_{\rm lacy}(\cH,\cP_2)$, which
 establishes the forward implication.

($\Leftarrow$) Conversely, suppose that  $\cS_{\rm loose}(\cH,\cP_1) \preceq \cS_{\rm lacy}(\cH,\cP_2)$ holds.
		Since
$$\cP_1 \preceq \cS_{\rm loose}(\cH,\cP_1) \mbox{ and } \cS_{\rm lacy}(\cH,\cP_2) \preceq \cP_2,$$
		by the definitions of $\cS_{\rm loose}$ and $\cS_{\rm lacy}$, we have $\cP_1 \preceq \cS_{\rm loose}(\cH,\cP_1) \preceq \cS_{\rm lacy}(\cH,\cP_2) \preceq \cP_2$.
		Moreover, $\cS_{\rm loose}(\cH,\cP_1), \cS_{\rm lacy}(\cH,\cP_2) \subseteq \cH$,  by definition.
		Therefore, by choosing either $\Sigma:=\cS_{\rm loose}(\cH,\cP_1)$ or $\Sigma:=\cS_{\rm lacy}(\cH,\cP_2)$, it holds that: $$\cP_1 \preceq \cS_{\rm loose}(\cH,\cP_1) \preceq \Sigma \preceq \cS_{\rm lacy}(\cH,\cP_2) \preceq \cP_2.$$
		This establishes the reverse implication.
\end{proof}

If we focus on the set of all species partitions that lie between the two given phenotypic partitions (i.e.  $\s:=\{\Sigma\subseteq\cH: \cP_1 \preceq \Sigma \preceq \cP_2\}$), then we automatically see that $\cS_{\rm loose}(\cH,\cP_1)$ is the finest element of $\s$ and  $\cS_{\rm lacy}(\cH,\cP_2)$ is the coarsest.\\

We now extend this problem to $k\geq 2$ phenotypic partitions. The question is: Given phenotypic partitions $\cP_1,\ldots,\cP_k$ and an integer $l$ with $0 \leq l \leq k$ is there a subset  $\Sigma$ of $\cH$ for which $\cP_1,\ldots,\cP_l \preceq \Sigma \preceq \cP_{l+1},\ldots,\cP_k$? The answer to this question is provided as follows. 

\begin{Corollary}
	There exists a $\Sigma\subseteq\cH$ with $\cP_1,\ldots,\cP_l \preceq \Sigma \preceq \cP_{l+1},\ldots,\cP_k$ for a given integer $l$ with $0 \leq l \leq k$ if and only if $\cS_{\rm loose}(\cH, {\rm lub}(\cP_1,\ldots,\cP_l)) \preceq \cS_{\rm lacy}(\cH, {\rm glb}(\cP_{l+1},\ldots,\cP_k))$.
\end{Corollary}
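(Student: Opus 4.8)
The plan is to reduce this $k$-partition statement to the $k=2$ case already settled in Theorem~\ref{inBetween}, by collapsing the two groups of phenotypic partitions into single partitions via the lattice operations. First I would set $\cP_L := {\rm lub}(\cP_1,\ldots,\cP_l)$ and $\cP_R := {\rm glb}(\cP_{l+1},\ldots,\cP_k)$, both of which are again partitions of $\x$ since $(\p(\x),\preceq)$ is a lattice. The key observation is a translation lemma: for any $\Sigma\in\p(\x)$ we have $\cP_i\preceq\Sigma$ for all $i=1,\ldots,l$ if and only if $\cP_L\preceq\Sigma$, and $\Sigma\preceq\cP_j$ for all $j=l+1,\ldots,k$ if and only if $\Sigma\preceq\cP_R$. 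This is precisely the defining universal property of ${\rm lub}$ and ${\rm glb}$: $\cP_L$ is the least upper bound of $\cP_1,\ldots,\cP_l$, so $\Sigma$ is an upper bound for all of them exactly when it dominates $\cP_L$; dually for $\cP_R$ as the greatest lower bound. This is the same reduction step used in the proof of Theorem~\ref{more} (Eqn.~(\ref{SAEeq}) and its analogue).

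Having established the translation, the condition $\cP_1,\ldots,\cP_l \preceq \Sigma \preceq \cP_{l+1},\ldots,\cP_k$ with $\Sigma\subseteq\cH$ becomes exactly $\cP_L \preceq \Sigma \preceq \cP_R$ with $\Sigma\subseteq\cH$. Now I would simply invoke Theorem~\ref{inBetween} with the two phenotypic partitions $\cP_1 := \cP_L$ and $\cP_2 := \cP_R$: such a $\Sigma$ exists if and only if $\cS_{\rm loose}(\cH,\cP_L) \preceq \cS_{\rm lacy}(\cH,\cP_R)$, which is precisely the stated inequality $\cS_{\rm loose}(\cH, {\rm lub}(\cP_1,\ldots,\cP_l)) \preceq \cS_{\rm lacy}(\cH, {\rm glb}(\cP_{l+1},\ldots,\cP_k))$. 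This completes the argument.

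Two boundary cases deserve a remark so the statement is not vacuous or ill-posed. When $l=0$ the left-hand group is empty, so there is no lower constraint on $\Sigma$; here one should interpret ${\rm lub}$ of the empty family as the finest partition $\cS_0$ (equivalently, $\cS_{\rm loose}(\cH,\cS_0)=\cS_0$), and symmetrically when $l=k$ the right-hand group is empty and ${\rm glb}$ of the empty family is the coarsest partition $\cS_1$. With these conventions the translation lemma still holds trivially (every $\Sigma$ is an upper bound for the empty family), so the reduction goes through unchanged.

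I expect the main obstacle to be purely a matter of careful bookkeeping rather than genuine mathematical difficulty: one must confirm that the universal property of ${\rm lub}$/${\rm glb}$ really does license replacing the list of constraints $\cP_1,\ldots,\cP_l\preceq\Sigma$ by the single constraint $\cP_L\preceq\Sigma$, and in particular that no interaction with the hierarchy-membership requirement $\Sigma\subseteq\cH$ is lost in the collapse. Since $\cP_L$ and $\cP_R$ are themselves defined independently of $\cH$, and the membership condition $\Sigma\subseteq\cH$ is imposed separately on the candidate $\Sigma$ (not on $\cP_L$ or $\cP_R$), the two conditions decouple cleanly and the reduction is sound. Thus the corollary follows immediately from Theorem~\ref{inBetween} once the reduction is in place.
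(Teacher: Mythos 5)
Your proposal is correct and takes essentially the same route as the paper: collapse the two groups into ${\rm lub}(\cP_1,\ldots,\cP_l)$ and ${\rm glb}(\cP_{l+1},\ldots,\cP_k)$ via the universal property of the lattice operations, then invoke Theorem~\ref{inBetween} on the resulting pair of partitions. The only cosmetic difference is that you package the reduction as a single biconditional translation lemma, whereas the paper runs the two implications of the equivalence separately (using the least-upper-bound property in one direction and $\cP_i \preceq {\rm lub}$ plus transitivity in the other); your explicit handling of the boundary cases $l=0$ and $l=k$ is a sensible addition the paper leaves implicit.
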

\begin{proof} 
($\Rightarrow$) 
If $\Sigma$ satisfies  $\cP_1,\ldots,\cP_l \preceq \Sigma \preceq \cP_{l+1},\ldots,\cP_k$ then 
$${\rm lub}(\cP_1,\ldots,\cP_l) \preceq \Sigma \preceq {\rm glb}(\cP_{l+1},\ldots,\cP_k).$$ If, in addition, $\Sigma\subseteq\cH$ then 
$\cS_{\rm loose}(\cH, {\rm lub}(\cP_1,\ldots,\cP_l)) \preceq \cS_{\rm lacy}(\cH, {\rm glb}(\cP_{l+1},\ldots,\cP_k))$ by 
Theorem~\ref{inBetween}.

($\Leftarrow$) Conversely, suppose that $$\cS_{\rm loose}(\cH, {\rm lub}(\cP_1,\ldots,\cP_l)) \preceq \cS_{\rm lacy}(\cH, {\rm glb}(\cP_{l+1},\ldots,\cP_k)).$$   By Theorem \ref{inBetween} there exists a set $\Sigma\subseteq\cH$ with ${\rm lub}(\cP_1,\ldots,\cP_l) \preceq \Sigma \preceq {\rm glb}(\cP_{l+1},\ldots,\cP_k)$.  Since  $\cP_1,\ldots,\cP_k\preceq{\rm lub}(\cP_1,\ldots,\cP_l)$ and ${\rm glb}(\cP_{l+1},\ldots,\cP_k)\preceq \cP_{l+1},\ldots,\cP_k$ it follows that $$\cP_1,\ldots,\cP_l \preceq \Sigma \preceq \cP_{l+1},\ldots,\cP_k.$$
\end{proof}
Note, that if $l=0$, the properties (AE) hold for $k$ phenotypic partitions, while if $l=k$, the properties (BE) hold for $k$ phenotypic partitions.

\section{Generating phenotypic partitions from edge-marked trees}

In this final section, we return to the setting of single partitions and ask when a phenotypic partition of $\x$ can be realized by marking some edges of a given phylogenetic $X$--tree $T=(V,E)$, and assigning two individuals to the same phenotype if there is no marked edge on the path between them. 

This process has a clear biological interpretation:  A  marked edge denotes that  a new phenotype evolves to  replace the existing phenotype, and it is assumed here that a phenotype that has already appeared will not appear a second time; in other words, the evolution  of the phenotype is homoplasy-free. Therefore, we define a {\em marking map}  $m:E\to\{0,1\}$ on $T$ as follows:
\begin{equation*}
m(e) =
\begin{cases}
1, & \text{if } e\in E\text{ is marked};\\
0, & \text{otherwise.}
\end{cases}
\end{equation*}
For $x,y\in\x$ we define the relation $\sim_m$ by $x\sim_m y$ if $m(e)=0$ for every edge $e$ on the path from $x$ to $y$. This induces a corresponding phenotypic partition $\cP_m$ as the set of equivalence classes of $\x$ under $\sim_m$. Not all possible partitions of $\x$ can be realized in this way; an  example for such a case is given in Fig. \ref{wrong}.
\begin{figure}[h]
	\centering
	\centering
\includegraphics[scale=1.0]{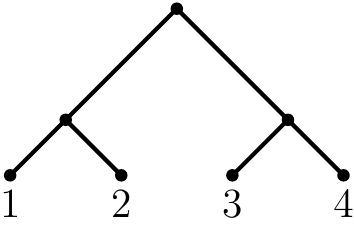}
	\caption{A rooted phylogenetic tree on which the phenotypic partition $\cP=\{\{1,3\},\{2,4\}\}$ cannot be realized by marking any subset of edges.}
	\label{wrong}
\end{figure}

In general, a partition $\cP$ of $\x$ can be realized by a marking map on a given rooted phylogenetic tree $T$  on $\x$ if and only if that partition is `convex' on $T$ (i.e. the minimal subtrees of $T$ that connect the leaves in each block of $\cP$ are vertex-disjoint), for details, see \cite{steel2016phylogeny}.
In general, there can be more than one marking map that leads to the same phenotypic partition, as Fig. \ref{two_markings} shows. For this reason, we define the relation $\simeq$ by $m\simeq m'$ if $\cP_m=\cP_{m'}$.

\begin{figure}[!h]
	\centering
\includegraphics[scale=1.0]{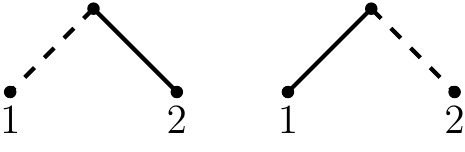}
	\caption{A rooted phylogenetic tree with different labeled edges which induce the phenotypic partition $\cP~=~\{\{1\},\{2\}\}$. The dashed line is marked and the solid line is not.}\label{two_markings}
\end{figure}

There is a close connection between the  property (E) for the phenotypic partition and its realization by marking maps, as we now describe.

\begin{Proposition}
\label{helpspro}
Suppose that $m$ is a marking map on a rooted phylogenetic tree $T$ on $\x$ with associated hierarchy $\cH$.  
	$\cP_m\subseteq\cH$ if and only if for any two distinct leaves $x,y$ with $x\sim_m y$, we have $m(e)=0$ for every edge $e$ in the subtree of $T$ with root $\text{lca}_T({x,y})$.
\end{Proposition}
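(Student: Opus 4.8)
The plan is to recast the statement in terms of the vertex/cluster correspondence of $T$: each cluster of $\cH$ is exactly the set of leaves descending from a unique vertex, and for distinct leaves $x,y$ the path joining them lies entirely inside the subtree rooted at $v_{xy}:=\text{lca}_T(\{x,y\})$, with $v_{xy}$ being the topmost vertex that path reaches. A preliminary observation I would record first is that $\sim_m$ is already transitive (so that $\cP_m$ is genuinely a partition and ``$x,y$ lie in a common block'' coincides with ``$x\sim_m y$''): for any three leaves, the path between two of them is contained in the union of the other two paths through their common median vertex, so the property ``every edge unmarked'' propagates. I would then prove the two implications separately, each resting on this path-containment fact.

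For the ($\Leftarrow$) direction I would assume the edge condition and show that an arbitrary block $B$ of $\cP_m$ equals the cluster at $v:=\text{lca}_T(B)$. The inclusion of $B$ in the set of leaves below $v$ is immediate. For the reverse inclusion, when $\lvert B\rvert\ge 2$ I pick $x,y\in B$ with $\text{lca}_T(\{x,y\})=v$ (possible since $B$ must meet at least two child subtrees of $v$, otherwise a child of $v$ would be a lower common ancestor); the hypothesis applied to this single pair forces every edge in the subtree rooted at $v$ to be unmarked. Then for any leaf $z$ below $v$, the $x$--$z$ path stays inside that subtree because $\text{lca}_T(\{x,z\})\succeq_T v$, so all its edges are unmarked and $x\sim_m z$, giving $z\in B$. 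The singleton case is trivial, and hence $B\in\cH$, so $\cP_m\subseteq\cH$.

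For the ($\Rightarrow$) direction I would assume $\cP_m\subseteq\cH$, fix distinct $x,y$ with $x\sim_m y$, and set $v=\text{lca}_T(\{x,y\})$. Their common block $B$ is then a cluster, say the set of leaves below a vertex $u$; since $x,y$ lie below $u$, $u$ is a common ancestor of $x,y$ and so $u\preceq_T v$, whence the entire subtree rooted at $v$ lies inside the cluster of $u$ and every leaf below $v$ belongs to $B$. Now, arguing by contradiction, suppose an edge $e=\{p,c\}$ (with $p$ the parent) inside the subtree rooted at $v$ is marked. Because $p$ is internal it has a second child $c'$; choosing any leaf $z$ below $c$ and any leaf $w$ below $c'$, both lie below $v$ and hence in $B$, so $z\sim_m w$; yet the $z$--$w$ path crosses $e$, contradicting $m(e)=1$. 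Therefore every such edge is unmarked.

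The main point requiring care is not the arithmetic but getting the quantifiers and the tree-structural facts exactly right. The crux of both directions is the dual use of one combinatorial fact: the path between two descendants of a vertex $w$ never leaves the subtree rooted at $w$, and, conversely, a marked edge strictly inside such a subtree separates two of its leaves. In ($\Leftarrow$) the hypothesis need only be invoked for a single well-chosen pair realizing the $\text{lca}$, and in ($\Rightarrow$) it is precisely the phylogenetic-tree degree condition (every internal vertex has at least two children, supplying the sibling child $c'$) that manufactures the separated pair $(z,w)$; stating these two steps precisely is where the argument should concentrate, the remaining verifications being routine.
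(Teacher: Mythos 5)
Your proof is correct and takes essentially the same approach as the paper's: both directions rest on identifying a block of $\cP_m$ with the set of leaves below a vertex, comparing that vertex with ${\rm lca}_T(\{x,y\})$ under $\preceq_T$, and using the fact that every internal vertex has at least two children, so that a marked edge inside a subtree separates two leaves lying in a common block. If anything, your write-up is more explicit than the paper's: the paper's forward direction asserts without detail that a block lying in $\cH$ forces $m(e)=0$ throughout the subtree rooted at its lca (your sibling-child contradiction supplies exactly this), and the paper's converse, argued by contraposition, exhibits a marked edge between leaves with $x\not\sim_m y$ without explicitly passing to a pair inside the block that realizes the lca --- the step you carry out when choosing $x,y\in B$ with ${\rm lca}_T(\{x,y\})={\rm lca}_T(B)$.
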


\begin{proof}
\mbox{}
	For this proof, we introduce some further notation. Given a rooted phylogenetic tree $T$ and a vertex $v$ of $T$, let  \textit{subtree} $(T,v)$ denote the rooted phylogenetic tree which has root $v$ and contains all vertices and edges which are descendants from $v$ in $T$.  

	\begin{enumerate}
	\item[($\Rightarrow$)]
		Suppose that $\cP_m\subseteq\cH$ holds, and that $x,y$ are distinct
		leaves of $X$ with $x \sim_m y$. Let  $P$ be the block of $\cP_m$ containing $x$ and $y$. Then $P \in \cH$ (since $\cP_m\subseteq\cH$) and so $m(e)=0$ for each edge in the subtree ($T,\text{lca}_T(P))$. Since ${\rm lca}_T(P) \preceq_T {\rm lca}_T(\{x,y\})$ it follows that $m(e) = 0$ for every edge $e$ in the  $\text{subtree } (T,\text{lca}_T(\{x,y\}))$, as claimed.
		\item[$(\Leftarrow$)] Suppose that $\cP_m\nsubseteq\cH$ holds. Then there exists $P\in\cP_m$ with $P\notin\cH$. This means that there exists a leaf $x\in\x$ in the subtree ($T,\text{lca}_T(P)$) with $x\notin P$. Select $y\in P$. Hence, there exists an edge $e$ on the path from $x$ to $y$, with $m(e)=1$ and with $e$ present in  the $\text{subtree } (T,\text{lca}_T(\{x,y\}))$.
	\end{enumerate}
\end{proof}

We now state a necessary condition and a sufficient condition for a phenotypic partition to be realizable by a marking map on a given tree.

\begin{Proposition}
\label{propro}
Let $\cP$ be a phentypic partition of $\x$ and $T$ be a rooted phylogenetic tree on $\x$, with corresponding hierarchy $\cH$.
\begin{itemize}
\item[(i)]
If $\cP$ can be realized by a marking map on $T$ then at least one  block $P$ of $\cP$ is in $\cH$.
\item[(ii)]
	If $\cP\subseteq\cH$,  then the phenotypic partition $\cP$ can be realized by a marking map. 
\end{itemize}
\end{Proposition}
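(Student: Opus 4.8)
The two parts call for different strategies, so I would treat them separately. For part (i), the plan is to locate a single block that is \emph{forced} to be a cluster by examining the bottom of the marking. If no edge of $T$ is marked, then $\cP_m=\{\x\}$ and $\x\in\cH$, so a block is a cluster and we are done. Otherwise, among all marked edges $(u,v)$ (with $u$ the parent of $v$), I would choose one for which $v$ is $\preceq_T$-maximal, i.e. a marked edge as close to the leaves as possible. By maximality, no edge strictly below $v$ is marked, so any two leaves descending from $v$ are joined by an unmarked path and hence lie in a common block; conversely, any leaf outside the cluster at $v$ is separated from them by the marked edge $(u,v)$, which lies on every path leaving the subtree rooted at $v$. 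Thus the cluster corresponding to $v$ is exactly one block of $\cP_m$, and being a cluster it lies in $\cH$. The only thing to check is that a $\preceq_T$-maximal marked edge exists, which is immediate once at least one edge is marked.

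For part (ii), the plan is to write down an explicit marking map and verify it realizes $\cP$. Since $\cP\subseteq\cH$, each block $P$ is the cluster at the vertex $v_P:=\text{lca}_T(P)$, whose descending leaves are exactly $P$. I would define $m$ by marking the edge entering $v_P$ for every block $P$ whose vertex $v_P$ is not the root, and leaving all other edges unmarked (if $\cP=\{\x\}$ this marks nothing). It then remains to prove $\cP_m=\cP$, which I would establish in two directions.

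The first direction shows that $\cP$ is no finer than $\cP_m$: for $x,y$ in a common block $P$, the $x$--$y$ path stays inside the subtree rooted at $v_P$, and no edge strictly inside that subtree is marked (such an edge would sit at the entrance of some $v_Q$ with $v_Q\succ_T v_P$, forcing $Q\subsetneq P$, which is impossible for disjoint blocks), so $x\sim_m y$. The second, and more delicate, direction shows that distinct blocks stay separated: if $x\in P$ and $y\in Q$ with $P\neq Q$, then since $P,Q$ are disjoint clusters the vertices $v_P,v_Q$ are $\preceq_T$-incomparable, which forces $\text{lca}_T(\{x,y\})\prec_T v_P$; consequently the (marked) edge entering $v_P$ lies on the portion of the $x$--$y$ path running from $x$ up to $\text{lca}_T(\{x,y\})$, giving $x\not\sim_m y$. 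Combining the two directions yields $\cP_m=\cP$. I expect this separation step to be the main obstacle, since it hinges on the positional claim that $v_P$ sits strictly below the lowest common ancestor of any cross-block pair; by contrast, the refinement direction and the existence argument in part (i) are routine once the right vertices are named.
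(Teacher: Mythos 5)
Your proposal is correct and takes essentially the same approach as the paper: part (i) selects a lowest (deepest) marked edge, whose cluster is then forced to be a block, and part (ii) constructs exactly the same marking map, assigning $1$ to the edge entering $\text{lca}_T(P)$ for each block $P\in\cP$. The only difference is that you verify $\cP_m=\cP$ in both directions explicitly, whereas the paper simply asserts that this marking map realizes $\cP$.
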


\begin{proof}
{\em Part (i):}
	Let $\cP$ be a phenotypic partition that can be realized by a marking map on $T$. If no edge is marked at all, then $\cP=\{\x\}$ and $\x\in\cH$ by definition.  If there exists at least one marked edge, then at least one marked edge $e$ has no marked edge descendant from it (i.e.  there exists a block $P\in\cP$ that consists of all descending leaves) and so   $P\in\cH$.\\
	{\em Part (ii):} 
	Suppose that $\cP\subseteq\cH$ holds. If $\cP = \{X\}$ then set $m(e)=0$ for each edge $e$ of $T$. Otherwise, if $\cP \neq \{X\}$,  for each block $P\in\cP$, let $e_P$ be the edge of $T$ that 
 is directed into the vertex $\text{lca}_T(P)$. Then set $m(e) = 1$, if $e = e_P$ for some $P \in\cP$ and set $m(e)=0$ otherwise. This gives a marking map that realizes $\cP$.
\end{proof}

\subsection{Maximum marking maps and the glb and the lub of multiple markings}

In order to extend the study to multiple marking maps in partitions,  we first require some further notation. 

Given a partition $\cP$ and a rooted phylogenetic tree $T$  we define a marking map $m_\cP$ on $T$ by:
	\begin{equation*}
	m_\cP(e) =
	\begin{cases}
		0, & \text{if } e\text{ is on a path from }x\text{ to }y, x\sim y \text{ for }\cP, x\neq y;\\
		1, & \text{otherwise.}
	\end{cases}
	\end{equation*}
Recall here that  $x\sim y$ for $\cP$ means that $x,y$ are in the same block of the partition $\cP$.
It can be  checked that the partition associated with $m_\cP$ (i.e. $\cP_{m'}$ for $m'= m_{\cP}$) is equal to or refines $\cP$ in the lattice of partitions of $\x$ (i.e. $\cP_{m'} \leq \cP$)  though $\cP_{m'}$ need not equal $\cP$ (for example, in Fig.~\ref{wrong}  with $\cP =  \{\{1,2\}, \{3,4\}\}$, the partition associated with $m_\cP$ is $\x = \{1,2,3,4\}$).

Notice that any marking map $m$ for a rooted tree $T=(V,E)$ is determined by the set $$\cE(m) := \{e \in E : m(e)=1\},$$
consisting of the edges of $T$ that are marked.

For a marking map $m$ on $T$, we let $\overline{m}$ be the marking map $m_\cP$ in which $\cP$ is taken to be $\cP_m$. 
Thus, $\overline{m}(e)$ takes the value 1, unless there is some pair $x,y \in X$ with $x \sim_m y$ for which $e$ lies on the path connecting $x$ and $y$. In particular,
$\overline{m}$ is the marking map that has the maximal number of edges assigned to state 1, while still inducing the same partition as $\cP_m$ (this is formalized
in part (ii) of Lemma~\ref{imp}).

\begin{Lemma}
\label{imp}
Let $m$ and $m'$ be marking maps on a rooted phylogenetic $T$ on $\x$ with associated hierarchy $\cH$.
\begin{itemize}
\item[(i)]
If  $\cE(m) \subseteq \cE(m')$, then $\cP_{m'} \preceq \cP_m.$
\item[(ii)] $m \simeq \overline{m}$ and 
if  $m' \simeq m$, then $\cE(m') \subseteq \cE(\overline{m})$.   Moreover, $\overline{(\overline{m})} = \overline{m}$.
\item[(iii)]
Suppose that $\cP_m \subseteq \cH$ and that $e, e'$ are edges of $T$ with $e'$ on the path from the root of $T$ to $e$. If $e \in \cE(\overline{m})$, then $e' \in \cE(\overline{m})$.
  \end{itemize}
\end{Lemma}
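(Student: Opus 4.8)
The plan is to handle the three parts in order, letting each rest on the previous ones. Part (i) I would prove entirely at the level of the relations. Assume $\cE(m)\subseteq\cE(m')$ and suppose $x\sim_{m'}y$. Then every edge $e$ on the path from $x$ to $y$ has $m'(e)=0$, so $e\notin\cE(m')$ and hence $e\notin\cE(m)$, giving $m(e)=0$; thus $x\sim_m y$. Consequently every block of $\cP_{m'}$ is contained in a block of $\cP_m$, which is exactly the statement $\cP_{m'}\preceq\cP_m$.

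For part (ii), the equivalence $m\simeq\overline m$ is an antisymmetry argument. The remark preceding the lemma already yields $\cP_{\overline m}=\cP_{m_{\cP_m}}\preceq\cP_m$. For the opposite refinement I would observe that if $x\sim_m y$ with $x\neq y$, then by the definition of $\overline m=m_{\cP_m}$ every edge on the path from $x$ to $y$ is assigned the value $0$, so $x\sim_{\overline m}y$; hence $\cP_m\preceq\cP_{\overline m}$, and antisymmetry of $\preceq$ gives $\cP_{\overline m}=\cP_m$, i.e. $m\simeq\overline m$. The maximality claim I would prove by contraposition: if $\overline m(e)=0$ then, by definition, $e$ lies on some path between distinct leaves $x,y$ with $x\sim_m y$; and if $m'\simeq m$ then $x\sim_{m'}y$, forcing $m'(e)=0$. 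This shows $\cE(m')\subseteq\cE(\overline m)$. Finally, idempotence follows since $\overline{(\overline m)}=m_{\cP_{\overline m}}=m_{\cP_m}=\overline m$, using $\cP_{\overline m}=\cP_m$.

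Part (iii) is the substantive step and the one I expect to be the main obstacle. The crucial leverage is that, by part (ii), $\cP_{\overline m}=\cP_m\subseteq\cH$, so Proposition~\ref{helpspro} applies to $\overline m$: for any distinct leaves $x,y$ with $x\sim_{\overline m}y$, every edge of $\text{subtree }(T,\text{lca}_T(\{x,y\}))$ is unmarked under $\overline m$. I would argue by contraposition, showing $\overline m(e')=0$ implies $\overline m(e)=0$. Assume $\overline m(e')=0$; by the definition of $\overline m$ there are distinct leaves $x,y$ with $x\sim_m y$ (equivalently $x\sim_{\overline m}y$) such that $e'$ lies on the path from $x$ to $y$. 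Put $w=\text{lca}_T(\{x,y\})$, so $e'$ lies in $\text{subtree }(T,w)$; since $e'$ is on the path from the root of $T$ to $e$, the edge $e$ lies below $e'$ and therefore also inside $\text{subtree }(T,w)$. Proposition~\ref{helpspro} then gives $\overline m(e)=0$, which is precisely the contrapositive of the desired implication.

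The one point requiring genuine care, and hence the main obstacle, is verifying the containment $e\in\text{subtree }(T,w)$: one must check that an edge lying on the path from $x$ to $y$ has its child endpoint a descendant of $w=\text{lca}_T(\{x,y\})$, and that any edge below such an edge remains within the subtree rooted at $w$. Once this is established, part (iii) is immediate from Proposition~\ref{helpspro}, completing the lemma.
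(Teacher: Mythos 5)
Your part (iii) is, step for step, the paper's own proof: assume $e'\notin\cE(\overline{m})$, use the definition of $\overline{m}$ to produce distinct leaves $x,y$ in a common block of $\cP_m$ whose connecting path contains $e'$, invoke part (ii) to get $\cP_{\overline{m}}=\cP_m\subseteq\cH$, and apply Proposition~\ref{helpspro} to conclude that every edge of the subtree rooted at ${\rm lca}_T(\{x,y\})$ --- in particular $e$, which lies below $e'$ and hence inside that subtree --- is unmarked by $\overline{m}$. The paper dismisses parts (i) and (ii) as ``straightforward'' and gives no argument, so there is nothing to compare there beyond correctness; your part (i), and the maximality and idempotence claims in your part (ii), are fine.

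The one genuine weak point is the first claim of part (ii). You obtain $m\simeq\overline{m}$ by combining $\cP_m\preceq\cP_{\overline{m}}$ (which you prove correctly from the definition of $m_{\cP_m}$) with $\cP_{\overline{m}}\preceq\cP_m$, which you justify only by citing the remark preceding the lemma. But that remark's inequality is printed backwards: its own accompanying example, in which $\cP_{m_\cP}$ comes out as the coarsest partition $\{\x\}$, shows that the general fact is $\cP\preceq\cP_{m_\cP}$, not $\cP_{m_\cP}\preceq\cP$. Read correctly, the remark gives exactly the direction you already proved by hand, and the direction you cite it for --- the genuinely nontrivial half of $m\simeq\overline{m}$ --- is left unsupported (and is \emph{false} for partitions $\cP$ not realizable by a marking, so it cannot be taken as a generic fact about $m_\cP$). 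It does hold here, and the fix is one line: if $m(e)=1$, then $e$ cannot lie on any path joining two distinct leaves of a common block of $\cP_m$, since every edge of such a path has $m$-value $0$; hence $\overline{m}(e)=1$, so $\cE(m)\subseteq\cE(\overline{m})$, and your part (i) then yields $\cP_{\overline{m}}\preceq\cP_m$. With that inserted, part (ii) --- and with it the appeal to $\cP_{\overline{m}}=\cP_m$ inside your parts (ii) and (iii) --- is sound, and the rest of the proposal stands.
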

\begin{proof}
The proof of parts (i) and (ii) is  straightforward.  For part (iii), suppose that $e'$ is not in $\cE(\overline{m})$. Then, by definition of $\overline{m}$, there exist leaves $x, x' \in P \in \cP_m$ for which 
the path from $x$ to $x'$ includes $e'$.  By part (ii) of this lemma,  $\cP_m = \cP_{\overline{m}}$ and so, by Proposition~\ref{helpspro}, $\overline{m}(e'')=0$ for every edge $e''$ in the subtree of $T$ with root ${\rm lca}_T(x,x')$. However,  edge $e$ lies in this subtree, yet $\overline{m}(e)=1$ so $e \not\in \cE(\overline{m})$. 
\end{proof}

Given marking maps $m_1, \ldots, m_k$ on a tree $T$, we let
$m_1 \vee m_2 \vee \cdots \vee m_k$  and $m_1 \wedge m_2 \wedge \cdots \wedge m_k$ be the marking maps on $T$ defined by:
$$\cE(m_1 \vee m_2 \vee \cdots \vee m_k) =  \bigcup_{i=1}^k \cE(m_i),$$
and
$$ \cE(m_1 \wedge m_2 \wedge \cdots \wedge m_k) = \bigcap_{i=1}^k \cE(m_i).$$
In words, $m_1 \vee m_2 \vee \cdots \vee m_k$ is the marking map that assigns 1 to all edges of $T$ that are marked 1 by at least one $m_i$ ($1 \leq i \leq k$), while $m_1 \wedge m_2 \wedge \cdots \wedge m_k$ assigns 1 only to the edges
of $T$ that  are marked 1 by every $m_i$ ($1 \leq i \leq k$).

The following result describes the glb and lub of a collection of partitions (relevant to Theorem~\ref{more}) in the setting where  each partition is derived from a marking map on a given tree, in terms of a single marking map on that tree.

\begin{Theorem}
\label{long}
	Consider  a rooted phylogenetic tree $T=(V,E)$ with associated hierarchy $\cH$, and marking maps $m_i: E\to\{0,1\}$, for $i=1, \ldots, k$. Then
	\begin{itemize}
	\item[(ii)]${\rm glb}(\cP_{m_1},\cP_{m_2}, \ldots, \cP_{m_k})=\cP_{m_1 \vee m_2 \vee \cdots \vee m_k}.$
	\item[(ii)] 
	Provided that $\cP_{m_i} \subseteq \cH$ for each $1, \ldots, k$, we also have:
$${\rm lub}(\cP_{m_1},\cP_{m_2}, \ldots, \cP_{m_k})=\cP_{\overline{m_1} \wedge \overline{m_2} \wedge \cdots \wedge \overline{m_k}}.$$
\end{itemize}
\end{Theorem}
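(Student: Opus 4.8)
The plan is to prove both parts by comparing the two partitions block-by-block, tracing the argument through paths in $T$.

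For part (i), I would set $m^\vee := m_1 \vee \cdots \vee m_k$ and note that two leaves $x,y$ lie in a common block of $\cP_{m^\vee}$ precisely when no edge on the path between them is marked by $m^\vee$; since $\cE(m^\vee) = \bigcup_i \cE(m_i)$, this happens iff no edge on that path is marked by any $m_i$, i.e. iff $x \sim_{m_i} y$ for every $i$, i.e. iff $x$ and $y$ share a block of $\cP_{m_i}$ for every $i$. On the other hand, from the explicit description of the ${\rm glb}$ as the set of non-empty intersections of blocks (one per partition), $x$ and $y$ share a block of ${\rm glb}(\cP_{m_1},\ldots,\cP_{m_k})$ iff they share a block of each $\cP_{m_i}$. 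The two equivalence relations coincide, so the partitions are equal.

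For part (ii), I would first replace each $m_i$ by $\overline{m_i}$: since $\cP_{\overline{m_i}} = \cP_{m_i}$ by Lemma~\ref{imp}(ii) the ${\rm lub}$ is unchanged, and now each $\cE(\overline{m_i})$ is ancestor-closed by Lemma~\ref{imp}(iii) (using $\cP_{m_i}\subseteq\cH$). Writing $m^\wedge := \overline{m_1}\wedge\cdots\wedge\overline{m_k}$, so that $\cE(m^\wedge) = \bigcap_i \cE(\overline{m_i})$, the inclusion $\cE(m^\wedge)\subseteq\cE(\overline{m_i})$ together with Lemma~\ref{imp}(i) gives $\cP_{m_i}\preceq\cP_{m^\wedge}$ for every $i$, so $\cP_{m^\wedge}$ is an upper bound of the $\cP_{m_i}$ and hence ${\rm lub}(\cP_{m_1},\ldots,\cP_{m_k})\preceq\cP_{m^\wedge}$. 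It then remains to establish the reverse refinement, namely that $x\sim_{m^\wedge}y$ implies $x\approx y$, where $\approx$ is the relation whose classes form the ${\rm lub}$. The crux is the structural claim that if an edge $e=\{h,w\}$, with $h$ the parent of $w$, is unmarked by $\overline{m_i}$, then all leaves below $h$ lie in a single block of $\cP_{m_i}$. I would prove this directly from the definition of $\overline{m_i}$: $e$ unmarked forces the existence of leaves $a,b$ in a common block $P$ with $e$ on the path between them, whence ${\rm lca}_T(a,b)\preceq_T h$; Proposition~\ref{helpspro} (valid since $\cP_{m_i}\subseteq\cH$) then renders the entire subtree below ${\rm lca}_T(a,b)$ unmarked, so $P$ is the cluster of all leaves below ${\rm lca}_T(P)\preceq_T h$, which contains every leaf below $h$. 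Granting the claim, I would write the path from $x$ to $y$ as $x=w_0,w_1,\ldots,w_t=y$ with edges $e_j=\{w_{j-1},w_j\}$; each $e_j$ is unmarked by $m^\wedge$, hence by some $\overline{m_{i_j}}$, so the claim puts all leaves below either endpoint of $e_j$ into one block $P_j$ of $\cP_{m_{i_j}}$. Choosing any leaf $\ell_j$ below $w_j$ (with $\ell_0=x$, $\ell_t=y$ forced), consecutive leaves $\ell_{j-1},\ell_j$ both lie in $P_j$, so $\ell_{j-1}\approx\ell_j$, and chaining gives $x\approx y$; this yields $\cP_{m^\wedge}\preceq{\rm lub}$ and completes the proof.

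The main obstacle I anticipate is precisely this structural claim, and specifically showing that the leaves below the \emph{upper} endpoint $h$ (not merely below $w$) all share a block. Ancestor-closedness alone only controls the subtree below $w$, so the sibling subtrees of $w$ must be handled by genuinely invoking the maximality of $\overline{m_i}$ in tandem with Proposition~\ref{helpspro}. This is also exactly what makes the turning point ${\rm lca}_T(x,y)$ of the path harmless, since that vertex is the upper endpoint of both incident path-edges and would otherwise break the chain.
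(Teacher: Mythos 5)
Your proof is correct, and for the substantive half of part (ii) it takes a genuinely different route from the paper. Part (i) is essentially the paper's argument (comparing the path-defined equivalence relation of $m_1 \vee \cdots \vee m_k$ with the explicit block-intersection description of the ${\rm glb}$), except that you treat all $k$ uniformly where the paper does $k=2$ and then inducts via associativity. In part (ii), the direction ${\rm lub}(\cP_{m_1},\ldots,\cP_{m_k}) \preceq \cP_{\overline{m_1}\wedge\cdots\wedge\overline{m_k}}$ is identical in both treatments; the difference lies in the reverse direction. The paper proves, for $k=2$, the dichotomy (its Claim 1) that $x \sim_{\overline{m_1}\wedge\overline{m_2}} y$ forces $x \sim_{\overline{m_1}} y$ or $x \sim_{\overline{m_2}} y$, and does so by contradiction: Lemma~\ref{imp}(iii) pins down the marks on the two path edges meeting $v = {\rm lca}_T(\{x,y\})$, the unmarked status of one of them produces a witness leaf $x'$ in a third pendant subtree below $v$, and Proposition~\ref{helpspro} then contradicts $\cP_{m_2}\subseteq\cH$; the case $k>2$ then requires induction together with $\overline{(\overline{m_i})}=\overline{m_i}$. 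You instead prove a positive local lemma --- if $\overline{m_i}(e)=0$ and $\cP_{m_i}\subseteq\cH$, then every leaf below the \emph{upper} endpoint of $e$ lies in a single block of $\cP_{m_i}$ --- derived correctly from the definition of $\overline{m_i}$ plus Proposition~\ref{helpspro}, and then chain intermediate leaves $\ell_j$ along the $x$--$y$ path to conclude $x \approx y$ directly, for all $k$ simultaneously; your handling of the turning point at ${\rm lca}_T(\{x,y\})$, where both incident path edges have that vertex as upper endpoint, is exactly what makes the chain close up. Your route buys a direct (non-contradiction) argument with no induction and no case analysis at the lca; what it gives up is the sharper content of the paper's Claim 1, namely that for two hierarchy-compatible marking maps the relation $\sim_{\overline{m_1}} \cup \sim_{\overline{m_2}}$ already joins $x$ to $y$ in one step, whereas your chaining only delivers the transitive closure $\approx$ --- which is all the theorem needs.
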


Before proceeding to the proof, we note that the additional assumption ($\cP_{m_i} \subseteq \cH$) in (ii) is required, even in the special case where $k=2$ and $m_i = \overline{m_i}$ for $i=1,2$.  
An example to demonstrate this is shown in Fig.~\ref{wrong2}.  In this example,  ${\rm lub}(\cP_{m_1},\cP_{m_2}, \ldots, \cP_{m_k})$ is not  realized by any marking map on $T$. 
Notice that Lemma~\ref{lemma} and  Proposition~\ref{propro}(ii) ensure that when $\cP_{m_i} \subseteq \cH$ for each $1, \ldots, k$, then ${\rm lub}(\cP_{m_1},\cP_{m_2}, \ldots, \cP_{m_k})$
is realized by at least  some marking map on $T$,  and Theorem~\ref{long}(ii) gives an explicit description of such a  map.
Note also, in part (ii) that we have replaced $m_i$ by $\overline{m_i}$ on the right-hand side. Without such a replacement, the identity in (ii) does not hold, as Fig.~\ref{two_markings} shows. 

\begin{figure}[h]
	\centering
	\centering
\includegraphics[scale=1.0]{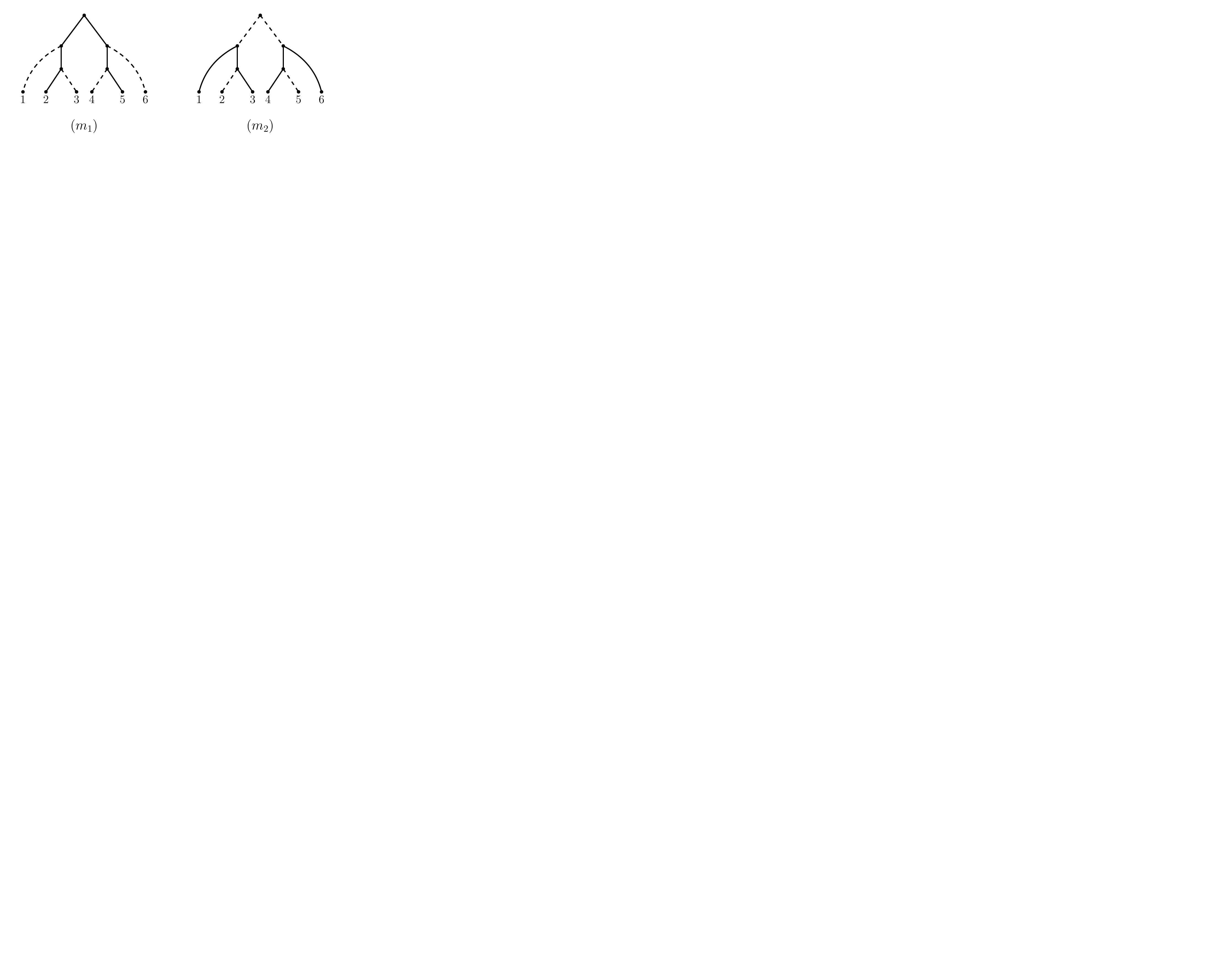}
	\caption{{\em Left:} The marking map $m_1$ (where a dashed edge $e$  corresponds to $m_1(e)=1$)  induces the phenotypic partition $\cP_{m_1}$ in which the only block of size $>1$ is $\{2,5\}$. {\em Right:} The marking map $m_2$ (where a dashed edge $e$  corresponds to  $m_2(e)=1$)  induces the phenotypic partition $\cP_{m_2}$ in which the only blocks of size $>1$ are $\{1,3\}$ and $\{4,6\}$.  Thus, the least upper bound (lub) of these two partitions contains the three blocks
	$\{1, 3\}, \{2,5\}, \{4,6\}$. However, the marking map $m_1 \wedge m_2$ assigns $0$ to all the edges of the tree shown, and so $\cP_{m_1 \wedge m_2} =X$.  Note that in this example, 
	$m_1 = \overline{m_1}$ and $m_2 = \overline{m_2}$.  }
	\label{wrong2}
\end{figure}

\bigskip

{\em Proof of Theorem~\ref{long}.}
\begin{proof}

{\em Part (i):}
First, we consider the case $k=2$ of just two marking maps $m_1$ and $m_2$ on $T$. 
Since $\cE(m_i) \subseteq \cE(m_1 \vee m_2)$,  for $i=1,2$, Lemma~\ref{imp}(i) gives
$\cP_{m_1 \vee m_2} \preceq \cP_{m_i}$ for $i=1,2$, and therefore:
$$\cP_{m_1 \vee m_2} \preceq {\rm glb} (\cP_{m_1}, \cP_{m_2}).$$
To replace $\preceq$ with equality it remains to show that ${\rm glb} (\cP_{m_1}, \cP_{m_2}) \preceq \cP_{m_1 \vee m_2}$, which means that if $S \in {\rm glb} (\cP_{m_1}, \cP_{m_2})$ and $S' \in \cP_{m_1 \vee m_2}$
with $S \cap S' \neq \emptyset$ then $S\subseteq S'$.  Recall (Section \ref{partitions}) that $S$ is a block of the glb of two given partitions in the poset $\p(\x)$ precisely if $S$ is the (non-empty) intersection of a block from the first partition with a block from the second. Thus, we may suppose that $S=\beta_1 \cap \beta_2$ where $\beta_1 \in \cP_{m_1}$ and $\beta_2 \in \cP_{m_2}$. Since $S \cap S' \neq \emptyset$ select $x \in S \cap S'$. Then $S'$ is the set of leaves  of $T$ whose path to  $x$ does not cross an edge marked 1 by either $m_1$ or $m_2$ (or both).  On the other hand, 
$\beta_i$ is the set of leaves of $T$ whose path to $x$ does not cross an edge marked 1 by $m_i$.  Thus, if  $y \in \beta_1 \cap \beta_2 = S$, then $y \in S'$, and so $S \subseteq S'$ as claimed. 

The case where $k>2$ now follows by
the associativity of $\vee$ on the lattice $(\p(\x),\preceq)$. We have
$$ {\rm glb}(\cP_{m_1},\cP_{m_2}, \ldots, \cP_{m_k})=  {\rm glb}(\cP_{m_1}, {\rm glb}(\cP_{m_2}, \ldots, \cP_{m_k})),$$ and so part (i) follows by induction, from 
the case where $k=2$.  This establishes part (i) of Theorem~\ref{long}.

{\em Part (ii):}
Again, we first  consider the case of $k=2$ marking maps $m_1$ and $m_2$ on $T$. 
Note that $\cE(\overline{m_1}\wedge \overline{m_2}) \subseteq \cE(\overline{m_i})$ for  $i=1, 2$.
So Lemma~\ref{imp} gives
$\cP_{m_i} = \cP_{\overline{m_i}}   \preceq \cP_{\overline{m_1} \wedge \overline{m_2}}$ for  $i=1,2$ and therefore:
 $${\rm lub}(\cP_{m_1}, \cP_{m_2}) \preceq \cP_{\overline{m_1} \wedge \overline{m_2}}.$$
 
To replace $\preceq$ with equality it remains to show that $\cP_{\overline{m_1} \wedge \overline{m_2}} \preceq {\rm lub} (\cP_{m_1}, \cP_{m_2})$, which means that if $S \in\cP_{\overline{m_1} \wedge \overline{m_2}}$ and $S' \in {\rm lub} (\cP_{m_1}, \cP_{m_2})$
with $S \cap S' \neq \emptyset$ then $S\subseteq S'$.  To this end, suppose that $x \in S \cap S'$ and that $x\sim_{\overline{m_1} \wedge {\overline{m_2}}} y$.
We will establish the following:

\begin{itemize}

\item[] {\bf Claim 1:} Given  $\cP_{m_1}, \cP_{m_2} \subseteq \cH$,  if $x \sim_{\overline{m_1} \wedge {\overline{m_2}}} y$, then  either
 $x \sim_{\overline{m_1}} y$ or $x \sim_{\overline{m_2}} y$ (or both) holds.
\end{itemize}

It follows from Claim 1 that  $x \approx y$ (where $\approx$ is the equivalence relation in the definition of lub from Section~\ref{partitions}) and so $y  \in S' \in {\rm lub} (\cP_{m_1}, \cP_{m_2})$, as required to show that 
$\cP_{\overline{m_1} \wedge \overline{m_2}} \preceq {\rm lub} (\cP_{m_1}, \cP_{m_2})$, and thereby to establish part (ii) in the case $k=2$.

\bigskip

Thus, for the $k=2$ case, it remains to prove Claim 1. We do this by assuming that Claim 1 is false, and derive a contradiction.
Now, if Claim 1 is false, then  the path from $x$ to $y$ crosses at least one edge $e_1$ with $\overline{m_1}(e_1)=1$ and at least one edge $e_2$ with
$\overline{m_2}(e_2)=1$, and there is no edge $e$ on this path with $\overline{m_1}(e)=\overline{m_2}(e)=1$ (since $x\sim_{\overline{m_1} \wedge {\overline{m_2}}} y$). 

Let $v= {\rm lca}_T(\{x, y\})$ and let  $v_x$ (respectively, $v_y$) be vertex adjacent to $v$ that is on the path from $v$ to $x$ (respectively, to $y$).
By Lemma~\ref{imp}(iii), it follows  that the edge $e=(v, v_x)$ has $\overline{m_i}(e)=1$ and $\overline{m_j}(e)=0$, while the edge  $e'=(v, v_y)$ has $\overline{m_i}(e)=0$ and $\overline{m_j}(e)=1$, 
 where $\{i,j\}=\{1,2\}$.  Without loss of generality, we may assume that $i=1$ and $j=2$.  The condition  that $\overline{m_2}(e)=0$  then implies (by the definition of $\overline{m_2}$) that 
 $e$ lies on a path connecting two leaves  of $T$ (which lie in the same block of $\cP_{m_2} = \cP_{\overline{m_2}}$)  and every edge in that path has $\overline{m_2}$ value equal to  zero. 
 One of these two leaves is in the subtree of $T$ below $v_x$, the other leaf $x'$ must also lie below $v$ since the edge $e''$ of $T$ that ends in $v$ has $\overline{m_1}(e'')=\overline{m_2}(e'')=1$ by Lemma~\ref{imp}(iii). Moreover, $x'$ cannot lie in the subtree
 with root $v_y$ (since the path would then cross $e'$ which has $\overline{m_2}(e')=1$), nor can it be in the subtree with root $v_x$ (since the path from $x'$ to $x$ has to cross $e$).
 Thus, $x'$ must lie in a third pendant subtree that attaches below $v$, as indicated in Fig.~\ref{proofig}.

 \begin{figure}[h]
	\centering
	\centering
	\includegraphics[scale=1.0]{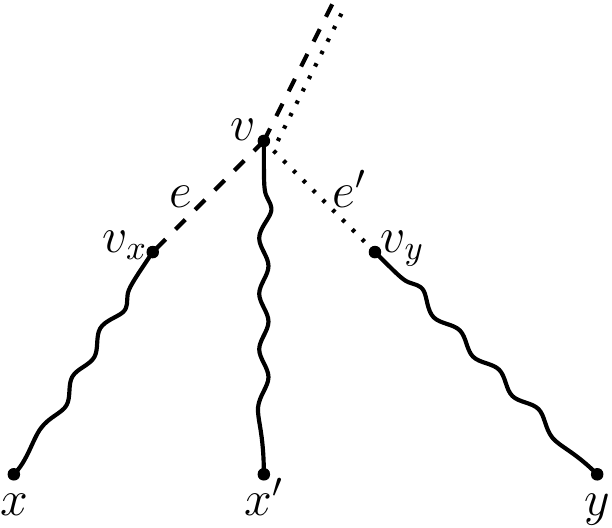}
	\caption{For the two leaves $x,y$ for which $x \sim_{\overline{m_1} \wedge {\overline{m_2}}} y$ the edge $e$ incident with $v={\rm lca}_T(\{x,y\})$ has
	$\overline{m_1}(e)=1$ and $\overline{m_2}(e)=0$ while the edge $e'$ has $\overline{m_1}(e')=0$ and $\overline{m_2}(e')=1$ (each  edge above $v$ has an $\overline{m_1}$ and
	$\overline{m_2}$ value equal to 1). Every edge on the path from $v$ to $x$, and on the path from $v$ to $x'$  has $\overline{m_2}$ value equal to 0, and so $x\sim_{\overline{m_2}} x'$.}
	\label{proofig}
\end{figure}

 But then Proposition~\ref{helpspro} implies that  $\cP_{m_2}= \cP_{\overline{m_2}}$ is not a subset of $\cH$, since $x \sim_{\overline{m_2}} x'$ but $\overline{m_2}$ is not zero on every 
 edge in the subtree $(T, {\rm lca}_T(\{x,x'\}))$ since this subtree included the edge $e'$ for which $\overline{m_2}(e')=1$.  This contradiction establishes Claim 1, and thereby the proof
 in the case where $k=2$.

\bigskip

The case where $k>2$ now follows by
the associativity of $\wedge$ on the lattice $(\p(\x),\preceq)$ and the fact that $\overline{(\overline{m_i})} = \overline{m_i}.$ We have
$$ {\rm lub}(\cP_{m_1},\cP_{m_2}, \ldots, \cP_{m_k})=  {\rm lub}(\cP_{m_1}, {\rm lub}(\cP_{m_2}, \ldots, \cP_{m_k})),$$ and so part (ii) follows by induction, from 
the case where $k=2$.  This establishes part (ii) of Theorem~\ref{long}.
\end{proof}

\section*{Concluding comments}

Our purpose in this paper was two-fold. First, we have represented some of the main results of \cite{manceau2016species} within the framework of lattice theory, which has allowed for more concise proofs, and elucidated the key combinatorial principles. Second, we have extended these results to the more general setting of $k\geq 2$ phenotypic partitions, and we have also explored the combinatorial aspects of representing phenotypic partitions  by marking maps.

An interesting project for future development could be to investigate the loose and lacy species partitions arising when the phenotypic character(s) evolve on a tree $T$ with leaf set $X$, and $\cH$ is the hierarchy associated with $T$.  The evolution of phenotypic characters on $T$ is typically modelled by 
discrete-state, continuous-time Markov processes that are already widely used in molecular phylogenetics \citep{fels}.  
Finite state Markov models can lead to homoplasy (convergent or reverse changes), while infinite state models always produce homoplasy-free character states at the leaves. Under either model, the character states at the leaves then induce a phenotypic partition of $X$, and  the stochastic properties of the  loose and lacy species  partition could then be studied. 

A first step would be to simply calculate  the probability that the 
loose and lacy partitions species coincide (i.e. a species partition satisfying all three properties (ABE), or, equivalently, each phenotype corresponds to a cluster of $\cH$). 
It would also be of interest to predict the distribution of block sizes in species partitions, a topic that has  long been of interest to biologists, dating back to \cite{yule} (see, for example, \cite{sco}). 

Finally, the definition and computation of  loose and lacy species partitions based on phylogenetic networks (rather than phylogenetic trees) could be a further interesting direction for future work.

\bigskip

	\section*{Acknowledgement}
	We thank Amaury Lambert for helpful discussions, Mareike Fischer for comments on an earlier version of this paper, and the (former) Allan Wilson Centre for funding this work.


\bibliographystyle{model2-names}

	\bibliography{references}

	\section{Appendix: Proof of Proposition~\ref{construct}.}
	Let $\widetilde{\cS}_{\rm loose}$ be the set of maximal elements  of $\cH_1$ and let $\widetilde{\cS}_{\rm lacy}$ be set of  maximal elements of $\cH_2$. 
We show that $\widetilde{\cS}_{\rm loose} = \cS_{\rm loose}$ and $\widetilde{\cS}_{\rm lacy} =\cS_{\rm lacy}$, by establishing the following properties:
	\begin{enumerate}
		\item[(i)] $\widetilde{\cS}_{\rm loose}$ and $\widetilde{\cS}_{\rm lacy}$ are partitions of $\x$,
		\item[(ii)] $\widetilde{\cS}_{\rm loose}$ satisfies (AE) and $\widetilde{\cS}_{\rm lacy}$ satisfies (BE),
		\item[(iii)] $\widetilde{\cS}_{\rm loose}$ is the finest partition of $\x$ that satisfies (AE) and $\widetilde{\cS}_{\rm lacy}$ is the coarsest partition of $\x$ that satisfies (BE).
	\end{enumerate}

\begin{proof}
{\em Part (i)}: 
		\begin{enumerate}
			\item 
			Proof that $\widetilde{\cS}_{\rm loose}$ is a partition of $X$. \\
			Suppose that $x \in X$.  Since $\cP$ is a partition of $X$, there is a set $P \in \cP$ with $x\in P$, so $x \in h_P$  (where $h_P$ is as in Eqn.~(\ref{hpeq})). Thus, $x$ is contained in at least one maximal element of $\cH_1$ (and so in 
		some set of  $\widetilde{\cS}_{\rm loose}$).  Moreover,  two different maximal elements of $\cH_1$, say $h$ and $h'$, have
		empty intersection. For otherwise, since  $h, h' \in \cH$, the nesting property of hierarchies implies that either $h \subsetneq h'$ or $h' \subsetneq h$ which is impossible if both sets are maximal.
	
			\item Proof that $\widetilde{\cS}_{\rm lacy}$ is a partition of $X$. \\
			Suppose that $x \in X$.  Then $x \in P$ for some set $P \in \cP$, and since $h=\{x\} \in \cH$, and $h \in \cP$,
			$\cH_2$ has a maximal set $h'$ containing $x$, so $x \in h' \in \cH_2$.  Moreover, two different maximal elements of $\cH_2$
			have empty intersection, for the same reason as holds for $\cH_1$.
		\end{enumerate}
{\em Part (ii)}:
			\begin{enumerate}
				\item Proof that $\widetilde{\cS}_{\rm loose}$ satisfies (AE):\\
			For any $h \in \widetilde{\cS}_{\rm loose}$ we have $h=h_P$ for some $P \in \cP$. Suppose that $Q \in \cP$ has $Q \cap h \neq \emptyset$.  Then $h_{Q} \cap h_{P} \neq \emptyset$ and so 
			$h_Q \subseteq h_P$ or $h_P \subseteq h_Q$ (since $h_{Q}$ and $h_{P}$ are both elements of a hierarchy). The maximality condition in the definition of $\widetilde{\cS}_{\rm loose}$ applied to $h=h_P$ now ensures
			that $h_Q \subseteq h_P$, so $Q\subseteq h_Q \subseteq h_P=h$ and so property (A) holds for $\widetilde{\cS}_{\rm loose}$; moreover, since $h \in \cH$, property (E) holds also. 
				
				\item Proof that $\widetilde{\cS}_{\rm lacy}$ satisfies (BE):\\
			For any $h \in 	\widetilde{\cS}_{\rm lacy}$ there is a set $P \in \cP$ with $h \subseteq P$.  For any $Q \in \cP$ with $Q \neq P$ we must have  $Q \cap h = \emptyset$, since otherwise $P \cap Q \neq \emptyset$. Thus, property (B) holds for $\widetilde{\cS}_{\rm loose}$; moreover, since $h \in \cH$, property (E) holds also. 			
			\end{enumerate}			
{\em Part (iii)}:					
				\begin{enumerate}
					\item Proof that $\widetilde{\cS}_{\rm loose}$ is the finest species partition satisfying (AE):\\
					This means for $\cP\preceq\cW \preceq\widetilde{\cS}_{\rm loose}$ with $\cW\subseteq\cH$ it holds that $\cW=\widetilde{\cS}_{\rm loose}$.\\
Suppose this is not the case (we will derive a contradiction). Then there must exist a set  $h_P \in \widetilde{\cS}_{\rm loose}$ 
for which $h_P$ is the disjoint union of $k \geq 2$ non-empty sets  $h_1, h_2, \ldots, h_k$ in $\cW$. Since $\cP \preceq \cW$, if $P \cap h_i \neq \emptyset$ we have $P \subseteq h_i$ and so $h_P \subseteq h_i$, which implies that $h_P = h_i$. Thus, $P$ can intersect at most one of the sets $h_i$, and so it must intersect exactly one set, say $h_1$, since the union of $h_1,\ldots, h_k$ is $h_P$ which contains $P$.   But this implies that each of the remaining $k-1$ sets is empty (since the union is disjoint), contradicting the requirement that the sets $h_i$ are non-empty.

					\item Proof that $\widetilde{\cS}_{\rm lacy}$ is the coarsest species partition satisfying (BE):\\
					This means for $\widetilde{\cS}_{\rm lacy}\preceq \cV\preceq\cP$ with $\cV \subseteq\cH$ it holds that $\cV=\widetilde{\cS}_{\rm lacy}$.\\
Suppose this is not the case (we will derive a contradiction). Then there must exist some element $h \in \cV$ and some element $h' \in \widetilde{\cS}_{\rm lacy}$ with $h' \subsetneq h$.  Let $P$ be an element of $\cP$ with $h \subseteq P$ (such an element must exist since
$h \in \cV \preceq\cP$).  Then $h$ and  $h'$ are both elements of $\cH_2$ (since they are both elements of $\cH$ and are contained within $P$) and since $h' \in \widetilde{\cS}_{\rm lacy}$ is a maximal element of $\cH_2$ we cannot have $h'\subsetneq h$, which provides the required contradiction.
				\end{enumerate}
\end{proof}

\end{document}